\def\bb{b}
\def\tvec{c}
\def\amat{M}
\def\qq{q}
\def\uu{u}
\def\zz{{\text{\boldmath $z$}}}
\def\llambda{{\text{\boldmath $\lambda$}}}
\def\pred{\mathop{\text{\sc Predecessor}}}
\def\succ{\mathop{\text{\sc Sucessor}}}
\def\TT{\EuScript{T}}
\def\MM{\mathbb{M}}
\def\reals{\mathbb{R}}
\def\zz{z}
\def\eval{\mathop{\mbox{\textsc{Evaluate}}}}
\def\evali{\mathop{\mbox{\textsc{Evaluate}}}^{-1}}
\def\ins{\mathop{\mbox{\textsc{Insert}}}}
\def\delete{\mathop{\mbox{\textsc{Delete}}}}
\def\atrans{\mathop{\textsc{Affine}}}
\def\itrans{\mathop{\textsc{Interval}}}
\def\Update{\mathop{\textsc{Update}}}
\def\UnUpdate{\mathop{\textsc{unUpdate}}}
\def\integrate{\mathop{\textsc{Integrate}}}
\def\merge{\mathop{\textsc{Merge}}}
\def\include{\mathop{\textsc{Include}}}
\def\unmerge{\mathop{\textsc{unMerge}}}
\def\uninclude{\mathop{\textsc{unInclude}}}
\def\pred{\mathop{\textsc{Pred}}}
\def\succ{\mathop{\textsc{Succ}}}
\newcommand\myCaption[1]{\small\refstepcounter{figure}%
   \figurename\ \thefigure :\ #1}
\newcommand{\lir}{LIR }
\newcommand{\ir}{IR }
\newcommand{\ur}{UR }
\newcommand{\hF}{\hat{F}}
\renewcommand{\c}[1]{\ensuremath{\EuScript{#1}}}
\renewcommand{\b}[1]{\ensuremath{\mathbb{#1}}}
\def\argmin{\mathop{\text{argmin}}}
\title{Lipschitz Unimodal and Isotonic Regression on Paths and Trees\footnote{The work was primarily done when the second and third authors were at Duke University. Research supported by NSF under grants
    CNS-05-40347, CFF-06-35000, and DEB-04-25465, by ARO grants
    W911NF-04-1-0278 and W911NF-07-1-0376, by an NIH grant
    1P50-GM-08183-01, by a DOE grant OEG-P200A070505, by a grant
    from the U.S.--Israel Binational Science Foundation, and a subaward to the University of Utah under NSF Award 0937060 to Computing Research Association.}}
\author{%
	\begin{tabular}{c}
	Pankaj K. Agarwal\\
	\small Duke University\\
	\small Durham, NC 27708\\
	\small pankaj@cs.duke.edu
	\end{tabular}
	\and 
	\begin{tabular}{c}
	Jeff M. Phillips\\
	\small University of Utah\\
	\small Salt Lake City, UT 84112\\
	\small jeffp@cs.utah.edu
	\end{tabular}
	\and 
	\begin{tabular}{c}
	Bardia Sadri\\
	\small University of Toronto\\
	\small Toronto, ON M5S 3G4\\
	\small sadri@cs.toronto.edu
	\end{tabular}
}%
\begin{document}
\begin{titlepage}
\maketitle
\thispagestyle{empty}
\begin{abstract}
We describe algorithms for finding the regression of $t$, a sequence of values, to the closest sequence $s$ by mean squared error, so that $s$ is always increasing (isotonicity) and so the values of two consecutive points do not increase by too much (Lipschitz).  The isotonicity constraint can be replaced with a unimodular constraint, where there is exactly one local maximum in $s$.  These algorithm are generalized from sequences of values to trees of values.  For each scenario we describe near-linear time algorithms.  
\end{abstract}
\end{titlepage}

\section{Introduction}\label{sec:intro}

Let $\MM$ be a triangulation of a polygonal region $P \subseteq \reals^2$ in which each vertex is 
associated with a real valued height (or elevation). Linear interpolation of vertex heights in the interior of 
each triangle of $\MM$ defines a piecewise-linear
function $t : P \rightarrow \reals$, called a \emph{height function}. 
A height function (or its graph) is widely used to model a 
two-dimensional surface in numerous applications (e.g. modeling the 
terrain of a geographical area).  With recent advances in sensing
and mapping technologies, such models are being generated
at an unprecedentedly large scale. These models are then used to 
analyze the surface and to compute various geometric and 
topological properties of the surface. For example, researchers in GIS
are interested in extracting river networks or computing visibility 
or shortest-path maps on terrains modeled as height functions. These structures depend heavily on 
the topology of the level sets of the surface and in particular on the topological relationship between its critical 
points (maxima, minima, and saddle points). 
Because of various factors such as measurement or sampling errors or the nature of the sampled surface, 
there is often plenty of noise in these surface models which introduces spurious 
critical points. This in turn leads to misleading or 
undesirable artifacts in the computed structures, e.g.,
artificial breaks in river networks. These difficulties have motivated
extensive work on topological simplification and noise removal through modification of the height function $t$ into another one 
$s: \MM \rightarrow \reals$ that has the desired set 
of critical points and that is as close to $t$ as 
possible~\cite{Bremer:03,Ni:04,Soille:04,Soille:04a,vKS09}. A popular
approach is to decompose the surface into pieces and modify each
piece so that it has a unique minimum or maximum~\cite{Soille:03a}.  In some applications, it is also desirable to impose the additional constraint that the function $s$ is Lipschitz; see below for further discussion. 

\paragraph{Problem statement.} 

Let $\MM = (V, A)$ be a planar graph with vertex set $V$ and arc (edge) set $A \subseteq V \times V$. We may treat $\MM$ as undirected in which case we take the pairs $(u, v)$ and $(v, u)$ as both representing the same undirected edge connecting $u$ and $v$.
Let $\gamma \geq 0$ be a real
parameter. A function $s: V \rightarrow \mathbb{R}$ is called
\begin{itemize} 
\item [(L)]{\em $\gamma$-Lipschitz} if  $(u, v) \in A$ implies $s(v) - s(u) \leq \gamma$.
\end{itemize}
Note that if $\MM$ is undirected, then Lipschitz constraint on an edge $(u,v) \in A$ implies $|s(u)-s(v)| \le \gamma$.  
For an undirected planar graph $\MM=(V,A)$, a function 
$s: V \rightarrow \reals$ is called
\begin{itemize}
\item [(U)]{\em unimodal} if $s$ has a unique {\em local maximum}, i.e. only one vertex $v \in V$ such that $s(v) > s(u)$ for all $(u, v) \in A$.
\end{itemize}
For a directed planar graph $\MM=(V,A)$, a function 
$s: V \rightarrow \reals$ is called
\begin{itemize}
\item [(I)]{\em isotonic} if $(u, v) \in A$ implies 
$s(u) \leq s(v)$.\footnote{%
A function $s$ satisfying the isotonicity constraint (I) must assign 
the same value to all the vertices of a directed cycle of $\MM$ 
(and indeed to all vertices in the same strongly connected component). 
Therefore, without loss of generality, we assume 
$\MM$ to be a directed {\em acyclic} graph (DAG).%
}
\end{itemize}
For an arbitrary function $t: V \rightarrow \mathbb{R}$ and a 
parameter $\gamma$, the {\em $\gamma$-Lipschitz unimodal regression 
($\gamma$-LUR\/)} of $t$ is a function $s: V \rightarrow \mathbb{R}$ 
that is $\gamma$-Lipschitz and unimodal on $\MM$ and 
minimizes $\|s - t\|^2  = \sum_{v \in V} (s(v) - t(v))^2$. 
Similarly, if $\MM$ is a directed planar graph, then 
$s$ is the {\em $\gamma$-Lipschitz isotonic regression 
($\gamma$-LIR\/)} of $t$ if $s$ satisfies (L) and (I) and 
minimizes $\|s - t\|^2$.
The more commonly studied {\em isotonic regression (IR)\/} and \emph{unimodal
regression}~\cite{AHKW06,ABERS55,BBBB72,Preparata:85} are the special cases of 
LIR and LUR, respectively, in which  $\gamma = \infty$, and therefore only the 
condition (I) or (U) is enforced.

Given a planar graph $\MM$, a parameter $\gamma$, and 
$t: V \rightarrow \mathbb{R}$, the LIR (resp.\ LUR) problem
is to compute the $\gamma$-LIR (resp.\ $\gamma$-LUR) of $t$. In this 
paper we propose near-linear-time algorithms for the LIR and LUR problems
for two special cases: when $\MM$ is a path or a tree. We study the special case where $\MM$ is a path prior to the more general case where it is tree because of the difference in running time and because doing so simplifies the exposition to the more general case. 

\paragraph{Related work.}
As mentioned above, there is extensive work on simplifying the topology
of a height function while preserving the geometry as much as possible.
Two widely used approaches in GIS are the so-called \emph{flooding} and
\emph{carving} techniques~\cite{Agarwal:06,Danner:07,Soille:04a,Soille:04}. 
The former technique raises the height of 
the vertices in ``shallow pits'' to simulate the effect of flooding,
while the latter lowers the value of the height function along a 
path connecting two pits so that the values along the path 
vary monotonically. As a result, one pit drains to the other and 
thus one of the minima ceases to exist. 
Various methods based on Laplacian smoothing have been proposed in the 
geometric modeling community to remove unnecessary critical 
points; see \cite{Bajaj:98,Bremer:03,Guskov:01,Ni:04} and references therein.
For example, Ni~\etal~\cite{Ni:04} proposed
the so-called \emph{Morse-fairing} technique, which solves a 
relaxed form of Laplace's equation, to remove undesired 
critical points.

A prominent line of research on topological simplification was 
initiated by Edelsbrunner~\etal~\cite{Edelsbrunner:00,Edelsbrunner:03} who introduced the 
notion of \emph{persistence}; see also~\cite{Edelsbrunner:07,Zomorodian:05,Zomorodian:09}. 
Roughly speaking,  each homology class of the 
contours in sublevel sets of a height function is characterized by 
two critical points at one of whom the class is born and at the other it is destroyed. The persistence of this class is then the height difference between these two critical points and can be thought of as the life span of 
that class. The persistence of a class effectively suggests the ``significance'' of its defining critical points. 
Efficient algorithms have been developed for computing the persistence
associated to the critical points of a height function and for simplifying topology based on 
persistence~\cite{Edelsbrunner:03,Bremer:03}.
Edelsbrunner~\etal~\cite{Edelsbrunner:06} and Attali~\etal~\cite{Attali:08} proposed algorithms 
for optimally eliminating all critical points of persistence below a 
threshold where the error is measured as 
$\|s-t\|_\infty = \max_{v \in V} |s(v)-t(v)|$. No efficient 
algorithm is known to minimize $\|s-t\|_2$.

The isotonic-regression (IR)  problem has been studied in 
statistics \cite{ABERS55,BBBB72,Preparata:85} since the 1950s.  It has many applications ranging from 
statistics \cite{Sto00} to bioinformatics \cite{BBBB72}, and from operations 
research \cite{MM85}  to differential optimization \cite{GW84}.  
Ayer \emph{et. al.} \cite{ABERS55} famously solves the \ir problem on 
paths in $O(n)$ time using the 
\emph{pool adjacent violator algorithm} (PAVA).  The algorithm 
works by initially treating each vertex as a level set and merging consecutive level sets that are out of order.  This algorithm is correct regardless of the order of the merges~\cite{RWD88}.  
Brunk \cite{Bru55} and Thompson \cite{Tho62} initiated the study of the \ir problem on general DAGs and trees, respectively.
Motivated by the problem of finding an optimal insurance rate structure over given risk classes for which a desired pattern of tariffs can be specified, Jewel \cite{Jewel:75} introduced the problem of Lipschitz isotonic regression on DAGs and showed connections between this problem and network flow with quadratic cost functions\footnote{It may at first seem that Jewel's formulation of the LIR $s$ of an input function $t$ on the vertices of a DAG is more general than ours in that he requires that for each $e = (u,v)\in A$, $s(v) \in [s(u) - \lambda_e, s(u)+\gamma_e]$ where $\lambda_e, \gamma_e \in \mathbb{R}^+$ are defined separately for each edge (in our formulation $\lambda_e = 0$ and $\gamma_e = \gamma$ for every edge $e$). Moreover, instead of minimizing the $L_2$ distance between $s$ and $t$ he requires $\sum_{v} w_v(s(v) - t(v))^2$ to be minimized where $w_v \in \mathbb{R}^+$ is a constant assigned to vertex $v$. However,  all our algorithms in this paper can be modified in a straight-forward manner to handle this formulation without any change in running-times.}.
Stout \cite{Sto08} solves the \ur problem on paths in $O(n)$ time.
Pardalos and Xue \cite{PX99} give an $O(n \log n)$ algorithm for the \ir problem on trees.  
For the special case when the tree is a star they give an $O(n)$ algorithm.  
Spouge \etal \cite{SWW03} give an $O(n^4)$ time algorithm for the \ir problem on DAGs.  
The problems can be solved under the $L_1$ and $L_{\infty}$ norms on 
paths \cite{Sto08} and DAGs \cite{AHKW06} as well, with an 
additional $\log n$ factor for $L_1$. Recently Stout~\cite{Stout:08} has
presented an efficient algorithm for istonotic regression in a planar
DAG under the $L_{\infty}$ norm. To our knowledge no polynomial-time algorithm is known for the 
UR problem on planar graphs, and there is no prior attempt on achieving efficient algorithms for  
Lipschitz isotonic/unimodal regressions in the literature. 

\paragraph{Our results.} 
Although the LUR problem for planar graphs remains elusive, we present
efficient exact algorithms for LIR and LUR problems on two special cases of 
planar graphs: paths and trees.  
In particular, we present an $O(n \log n)$ algorithm for computing the LIR on a path of length $n$ (Section~\ref{sec:pathlir}), and an $O(n \log n)$ algorithm on a tree with $n$ nodes (Section~\ref{sec:treelir}). 
We present an $O(n\log^2 n)$ algorithm for computing the
LUR problem on a path of length $n$ (Section~\ref{sec:pathlur}). 
Our algorithm can be extended to solve the LUR problem on an unrooted tree in $O(n \log^3 n)$ time (Section~\ref{sec:treelur}). 
The LUR algorithm for a tree is particularly interesting because of 
its application in the aforementioned carving 
technique~\cite{Danner:07,Soille:04,Soille:03a}. 
The carving technique modifies the height function along a number 
of trees embedded on the terrain where the heights of the vertices 
of each tree are to be changed to vary monotonically towards 
a chosen ``root'' for that tree. In other words, to perform the 
carving, we need to solve the IR problem on each tree. The downside of 
doing so is that the optimal IR solution happens to be a step function 
along each path toward the root of the tree with potentially large 
jumps. Enforcing the Lipschitz condition prevents sharp jumps 
in function value and thus provides a more natural solution 
to the problem.

Section~\ref{sec:datastructure} presents a data structure, called {\em affine composition tree} (ACT), for maintaining a $xy$-monotone polygonal chain, which can be regarded as the graph of a monotone piecewise-linear function $F: \reals \rightarrow \reals$.  
Besides being crucial for our algorithms, ACT is interesting in its own right.
A special kind of binary search tree, an ACT supports a number of operations to query and update the chain, each taking $O(\log n)$ time.
Besides the classical insertion, deletion, and query (computing $F(x)$ or $F^{-1}(x)$ for a given $x \in \reals$),  one can apply an $\itrans$ operation that modifies a contiguous subchain provided that the chain  remains $x$-monotone after the transformation, i.e., it remains the graph of a monotone function.  


\section{Energy Functions}
On a discrete set $U$, a real valued function $s: U \rightarrow \mathbb{R}$ can be viewed as a point in the $|U|$-dimensional Euclidean space in which coordinates are indexed by the elements of $U$ and the component $s_u$ of $s$ associated to an element $u \in U$ is $s(u)$. We use the notation $\mathbb{R}^U$ to represent the set of all real-valued functions defined on $U$. 

Let $\MM = (V, A)$ be a directed acyclic graph on which we wish to compute $\gamma$-Lipschitz isotonic regression of an {\em input function} $t \in \mathbb{R}^V$. For any set of vertices $U \subseteq V$, let $\MM[U]$ denote the subgraph of $\MM$ induced by $U$, i.e. the graph $(U, A[U])$, where $A[U] = \{(u, v) \in A: u, v \in U\}$.  The set of $\gamma$-Lipschitz isotonic functions on the subgraph $\MM[U]$ of $\MM$ constitutes
a convex subset of $\mathbb{R}^U$, denoted by $\Gamma(\MM,U)$.  It is the common intersection of all half-spaces determined by the isotonicity and Lipschitz constraints associated 
with the edges in $A[U]$, i.e., $s_u \le s_v$ and $s_v \le s_u+\gamma$ 
for all $(u,v) \in A[U]$.  

For $U \subseteq V$ we define $E_U : \mathbb{R}^V \to \mathbb{R}$ as 
\[
E_U(s) =  \sum_{v \in U} (s(v) - t(v))^2.
\]
Thus, the $\gamma$-Lipschitz isotonic regression of the input function $t$ is $\sigma = \arg\min_{s \in \Gamma(\MM,V)} E_V(s)$.  For a subset $U \subseteq V$ and $v \in U$ define the function $E_{\MM[U]}^v: \mathbb{R} \rightarrow \mathbb{R}$ as 
\[
E_{\MM[U]}^v(x) = \min_{s \in \Gamma(\MM,U); s(v) = x} E_U(s).
\]

\begin{lemma}\label{lem:convex-general}
For any $U \subseteq V$ and $v \in U$, the function $E_{\MM[U]}^v$ is continuous and strictly convex. 
\end{lemma}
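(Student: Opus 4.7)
The plan is to show three things in order: that $E_{\MM[U]}^v$ is well-defined on all of $\reals$ (i.e., the min is attained and unique); that it is strictly convex; and finally that continuity follows automatically from the previous step.

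First, I would verify that for every $x \in \reals$ there exists some feasible $s \in \Gamma(\MM,U)$ with $s(v) = x$. The constant function $s \equiv x$ satisfies every constraint of the form $s_u \le s_w$ and $s_w \le s_u + \gamma$, so it lies in $\Gamma(\MM, U)$. Hence the domain of $E_{\MM[U]}^v$ is all of $\reals$. For any fixed $x$, the slice $\Gamma_x := \{s \in \Gamma(\MM,U) : s(v) = x\}$ is a nonempty closed convex subset of $\reals^U$ (intersection of the closed convex polyhedron $\Gamma(\MM,U)$ with the hyperplane $\{s_v = x\}$). Since $E_U(s) = \sum_{u \in U}(s(u) - t(u))^2$ is a strictly convex, coercive quadratic in $s$, its minimum on $\Gamma_x$ is attained at a unique point; call it $s^*_x$. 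Thus $E_{\MM[U]}^v(x) = E_U(s^*_x)$ is well-defined.

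Next I would establish strict convexity via the standard partial-minimization argument. Fix $x_1 \ne x_2$ in $\reals$, $\lambda \in (0,1)$, and set $x_\lambda = \lambda x_1 + (1-\lambda)x_2$. Let $s_1 = s^*_{x_1}$ and $s_2 = s^*_{x_2}$ and set $s_\lambda = \lambda s_1 + (1-\lambda)s_2$. Since $\Gamma(\MM,U)$ is convex, $s_\lambda \in \Gamma(\MM,U)$, and its $v$-coordinate is $x_\lambda$, so $s_\lambda \in \Gamma_{x_\lambda}$. Because $s_1(v) = x_1 \ne x_2 = s_2(v)$, we have $s_1 \ne s_2$, so strict convexity of $E_U$ gives
\[
E_U(s_\lambda) < \lambda E_U(s_1) + (1-\lambda)E_U(s_2) = \lambda E_{\MM[U]}^v(x_1) + (1-\lambda)E_{\MM[U]}^v(x_2).
\]
Since $E_{\MM[U]}^v(x_\lambda) \le E_U(s_\lambda)$ by definition of the minimum, this is the strict-convexity inequality.

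Finally, continuity is immediate: any convex function on $\reals$ (equivalently, on the interior of its effective domain) is continuous, and we have just established convexity on all of $\reals$.

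The one subtle step is the existence of the minimizer $s^*_x$, but this follows cleanly from the fact that $\Gamma_x$ is closed and $E_U$ is coercive and strictly convex on $\reals^U$; the only care needed is to notice that the constant function witnesses nonemptiness of $\Gamma_x$ for every $x$, so the function is finite on all of $\reals$ and no boundary issues ever arise.
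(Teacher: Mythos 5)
Your proof is correct and follows the same partial-minimization strategy as the paper's: take the minimizers $s_1, s_2$ over the slices at $x_1, x_2$, form their convex combination $s_\lambda$, observe that $s_\lambda$ is feasible for the slice at $x_\lambda = \lambda x_1 + (1-\lambda)x_2$, and invoke convexity of $E_U$ together with the definition of $E_{\MM[U]}^v$ as a minimum. Two of your additions, however, are genuine improvements over the paper's exposition. First, you verify that the slice $\{s \in \Gamma(\MM,U) : s(v) = x\}$ is nonempty for every $x \in \reals$ (the constant function works because $\gamma \ge 0$) and that the minimum over it is attained and unique (closed convex set, coercive strictly convex quadratic), so $E_{\MM[U]}^v$ is well-defined and finite on all of $\reals$; the paper simply assumes the minimizers $z_x, z_y$ exist. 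Second, you pinpoint the source of strictness: since $s_1(v) = x_1 \ne x_2 = s_2(v)$, the minimizers are distinct vectors in $\reals^U$, so strict convexity of $E_U$ yields a \emph{strict} inequality $E_U(s_\lambda) < \lambda E_U(s_1) + (1-\lambda)E_U(s_2)$. The paper invokes ``strong convexity'' but writes only the non-strict inequality and then asserts strict convexity of $E_{\MM[U]}^v$ without explaining why the inequality cannot be tight; your version closes that gap. The final step, that convexity on all of $\reals$ implies continuity, matches the paper's.
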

%
\begin{proof}
Given $x, y \in \b{R}$ and $0 \leq \alpha \leq 1$, consider the functions 
\begin{eqnarray*}
z_x &=& \arg \min_{z \in \Gamma(\MM,U); z(v) = x} E_U(z)
\\
z_y &=& \arg \min_{z \in \Gamma(\MM,U); z(v) = y} E_U(z) 
\\
z_a &=& \alpha z_x + (1-\alpha) z_y.
\end{eqnarray*}
The strong convexity of $E_U$ implies that 
$$ 
\alpha E_U(z_x) + (1-\alpha) E_U(z_y) \geq E_U(z_\alpha).
$$
Furthermore, the convexity of $\Gamma(\MM,U)$ implies that $z_\alpha$ is also in $\Gamma(\MM,U)$.  Since by definition $z_\alpha(v) = \alpha x + (1-\alpha) y$, we deduce that 
$$
E_U(z_\alpha) \geq \min_{z \in \Gamma(\MM,u); z(v) = \alpha x + (1-\alpha)y} E_U(z).
$$
Thus, the minimum is precisely $E_U^v(\alpha x + (1-\alpha) y)$. 
This proves that $E_U^v$ is strictly convex on $\mathbb{R}$ and therefore continuous (and in fact differentiable except in countably many points).
%
%
\end{proof}

\section{Affine Composition Tree}\label{sec:datastructure}

In this section we introduce a data structure, called \emph{affine 
composition tree} (ACT), for representing an 
$xy$-monotone polygonal chain in $\reals^2$, which is being deformed 
dynamically. Such a chain can be regarded as the graph of a 
piecewise-linear monotone function $F:\reals\rightarrow \reals$, and thus is bijective. 
A {\em breakpoint} of $F$ is the $x$-coordinate of a {\em vertex} of 
the graph of $F$ (a vertex of $F$ for short), i.e., a 
$b \in \reals$ at which the left and right derivatives of $F$ disagree. The number of breakpoints of $F$ will be denoted by $|F|$. 
 A continuous piecewise-linear function $F$ with breakpoints 
$b_1 < \dots < b_n$ can be characterized by its vertices 
$\qq_i = (b_i, F(b_i)), i = 1, \dots, n$ together with the {\em slopes}
$\mu_-$ and $\mu_+$ of its {\em left} and {\em right unbounded pieces},
respectively, extending to $-\infty$ and $+\infty$. 
An {\em affine transformation} of $\reals^2$ is a map 
$\phi: \reals^2 \rightarrow \mathbb{R}^2,\, \qq \mapsto \amat 
\cdot \qq + \tvec$ where $\amat$ is a nonsingular $2\times 2$ matrix, 
(a {\em linear transformation\/}) and $\tvec \in \reals^2$ is a 
{\em translation vector} --- in our notation we treat 
$\qq \in \reals^2$ as a column vector.

\begin{figure}
\begin{center}
\includegraphics[scale=.6]{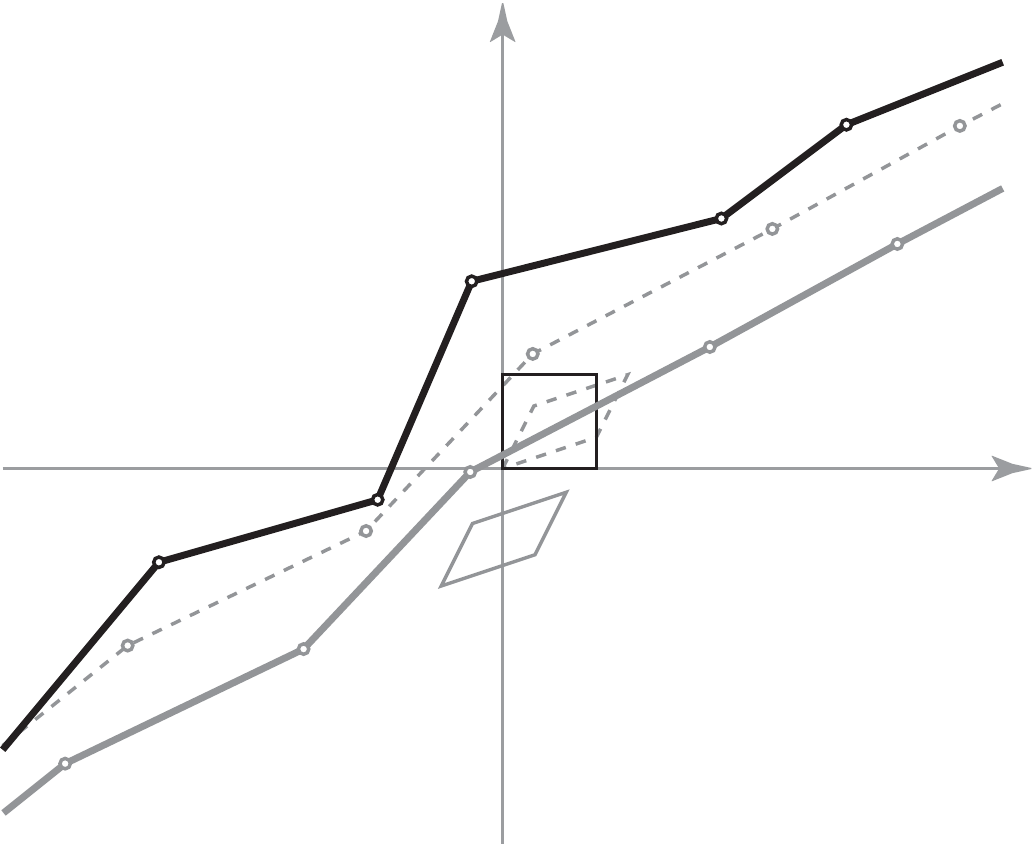}\qquad\qquad
\includegraphics[scale=.6]{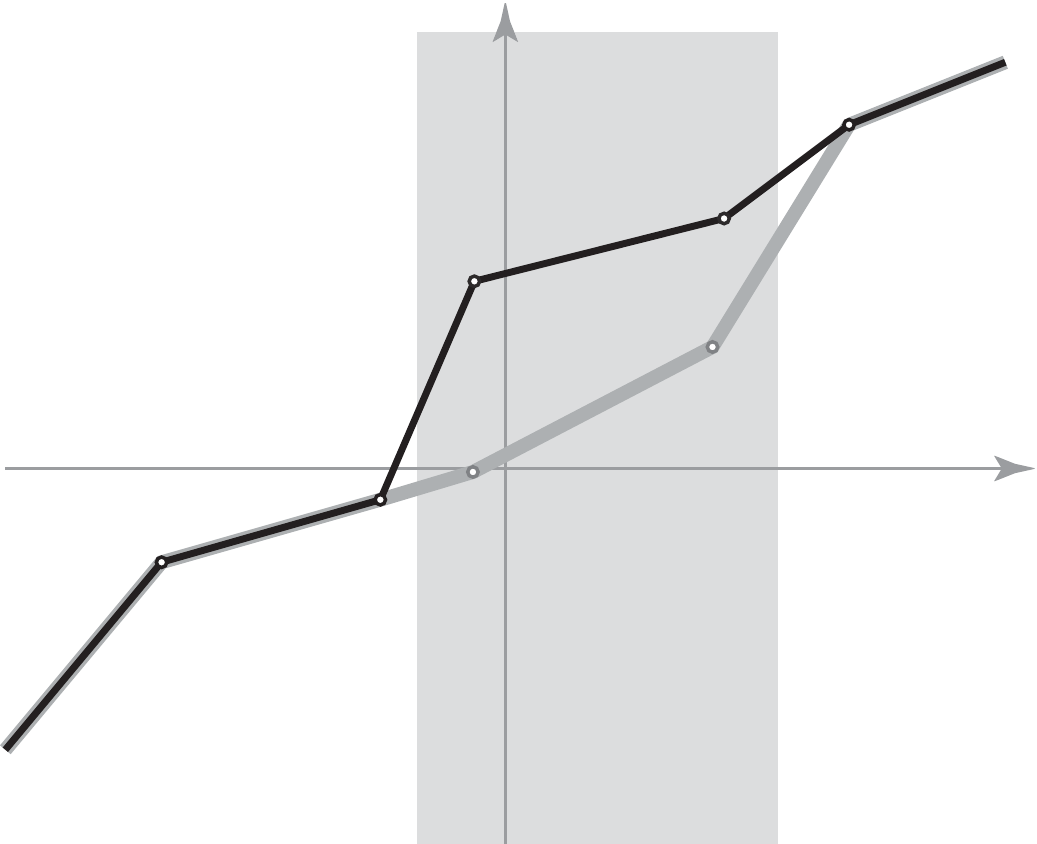}
\end{center}
\myCaption{\label{fig:trans} Left: the graph of a monotone piecewise 
linear function $F$ (in solid black). Vertices are marked by hollow 
circles. The dashed curve is the result of applying the linear 
transform $\psi_0(\qq) = \amat \qq$ where $\amat = 
\bigl(\protect\begin{smallmatrix}1 & 1/3 \\ 1/3 & 
	2/3\end{smallmatrix}\bigr)$. 
The gray curve, a translation of the dashed curve under vector 
$\tvec = \bigl(\begin{smallmatrix}-1 \\ -2\end{smallmatrix}\bigr)$, is the 
result of applying $\atrans(\psi)$ to $F$ where $\psi(\qq) = \amat \qq + \tvec$. 
Right: $\itrans(\psi, \tau^-, \tau^+)$, only the vertices of curve $F$ 
whose $x$-coordinates are in the marked interval [$\tau^-, \tau^+]$ 
are transformed. The resulting curve is the thick gray curve. }
\end{figure}

An ACT supports the following operations on a monotone piecewise-linear
function $F$ with vertices $\qq_i = (b_i, F(b_i)), i = 1, \dots, n$:
\begin{enumerate}
\item $\eval(a)$ and $\evali(a)$: Given any $a \in \mathbb{R}$, return 
$F(a)$ or $F^{-1}(a)$.

\item $\ins(\qq)$: Given a point $\qq = (x, y) \in \mathbb{R}^2$ insert 
$\qq$ as a new vertex of $F$. If $x \in (b_i, b_{i+1})$,
this operation removes the segment $\qq_i\qq_{i+1}$ from the graph of 
$F$ and replaces it with two segments $\qq_i\qq$ and $\qq\qq_{i+1}$, 
thus making $x$ a new breakpoint of $F$ with $F(x) = y$. 
If $x < b_1$ or $x > b_k$, then the affected unbounded piece of $F$ is 
replaced with one parallel to it but ending at $q$ and 
a segment connecting $q$ and the appropriate vertex of $F$ 
($\mu_+$ and $\mu_-$ remain intact).  We assume that 
$F(b_i) \le y \le F(b_{i+1})$.

\item $\delete(b)$: Given a breakpoint $b$ of $F$, removes the vertex 
$(b, F(b))$ from $F$; a delete operation modifies $F$ in a 
manner similar to insert.

\item $\atrans(\psi)$: Given an affine transformation 
$\psi:\mathbb{R}^2 \rightarrow \mathbb{R}^2$, modify the 
function $F$ to one whose graph is the result of application of 
$\psi$ to the graph of $F$. See Figure \ref{fig:trans}(Left).

\item $\itrans_p(\psi, \tau^-, \tau^+)$: Given an affine transformation
$\psi:\reals^2\rightarrow \reals^2$ and $\tau^-, \tau^+ \in \reals$ 
and $p \in \{x, y\}$, this operation applies $\psi$ to all vertices 
$v$ of $F$ whose $p$-coordinate is in the range $[\tau^-, \tau^+]$. Note that $\atrans(\psi)$ is equivalent to $\itrans_p(\psi, -\infty, +\infty)$ for $p = \{x,y\}$. See Figure \ref{fig:trans}(Right).


\end{enumerate} 

It must be noted that $\atrans$ and $\itrans$ are applied 
with appropriate choice of transformation parameters so that the 
resulting chain remains $xy$-monotone.  

An ACT $\TT = \TT(F)$ is a red-black tree that stores the vertices
of $F$ in the sorted order, i.e., the $i$th node of $\TT$ is
associated with the $i$th vertex of $F$. However, instead of storing
the actual coordinates of a vertex, each node $\zz$ of $\TT$ stores
an affine transformation 
$\phi_z: \qq \mapsto \amat_\zz \cdot \qq + \tvec_\zz$. 
If $\zz_0, \dots, \zz_k = \zz$ is the path in $\TT$ from the root
$\zz_0$ to $\zz$, then let 
$
\Phi_\zz(\qq) = \phi_{\zz_0}(\phi_{\zz_1}(\dots \phi_{\zz_k}(\qq)\dots)).
$
Notice that $\Phi_\zz$ is also an affine transformation. 
The actual coordinates of the vertex associated with $\zz$ are $(q_x, q_y) = \Phi_\zz(\overline{0})$ where $\overline{0} = (0, 0)$.

Given a value $b \in \reals$ and $p \in \{x,y\}$, let $\pred_p(b)$ (resp.\ $\succ_p(b)$) denote the rightmost (resp.\ leftmost) vertex $q$ of $F$ such that the $p$-coordinate of  $q$ is at most (resp.\ least) $b$.
Using ACT $\TT$, $\pred_p(b)$ and $\succ_p(b)$ can be computed in $O(\log n)$ time by following a path in $\TT$, composing the affine transformations along the path, evaluating the result at $\overline{0}$, and comparing its $p$-coordinate with $b$.
We can answer $\eval(a)$ queries by  determining the vertices $\qq^- = \pred_x(a)$ and $\qq^+ = \succ_x(a)$ of $F$ immediately preceding and succeeding $a$ and interpolating $F$ linearly between $\qq^-$ and $\qq^+$; if $a < b_1$ (resp. $a > b_k$), then $F(a)$ is calculated using $F(b_1)$ and $\mu_-$ (resp. $F(\bb_k)$ and $\mu_+$). 
Since $\bb^-$ and $\bb^+$ can be computed in $O(\log n)$ time and the interpolation takes constant time, $\eval(a)$ is answered in time $O(\log n)$. Similarly, $\evali(a)$ is answered using $\pred_y(a)$ and $\succ_y(a)$.

A key observation of ACT is that a standard rotation on any edge of $\TT$ can be performed in $O(1)$ time by modifying the stored affine transformations in a constant number of nodes (see Figure \ref{fig:rotation}) based on the fact that an affine transformation $\phi: \qq \mapsto \amat \cdot \qq + \tvec$ has an inverse affine transformation $\phi^{-1}: \qq \mapsto \amat^{-1}\cdot (\qq - \tvec)$; provided that the matrix $\amat$ is invertible.
A point $\qq \in \mathbb{R}^2$ is inserted into $\TT$ by first computing the affine transformation $\Phi_\uu$ for the node $\uu$ that will be the parent of the leaf $\zz$ storing $\qq$.  To determine $\phi_\zz$ we solve, in constant time, the system of (two) linear equations $\Phi_\uu(\phi_\zz(\overline{0})) = \qq$.  The result is the translation vector $\tvec_\zz$.  The linear transformation $\amat_\zz$ can be chosen to be an arbitrary invertible linear transformation, but for simplicity, we set $\amat_\zz$ to the identity matrix. Deletion of a node is handled similarly. 

\begin{figure}[b]
\begin{center}
\includegraphics[scale=.4]{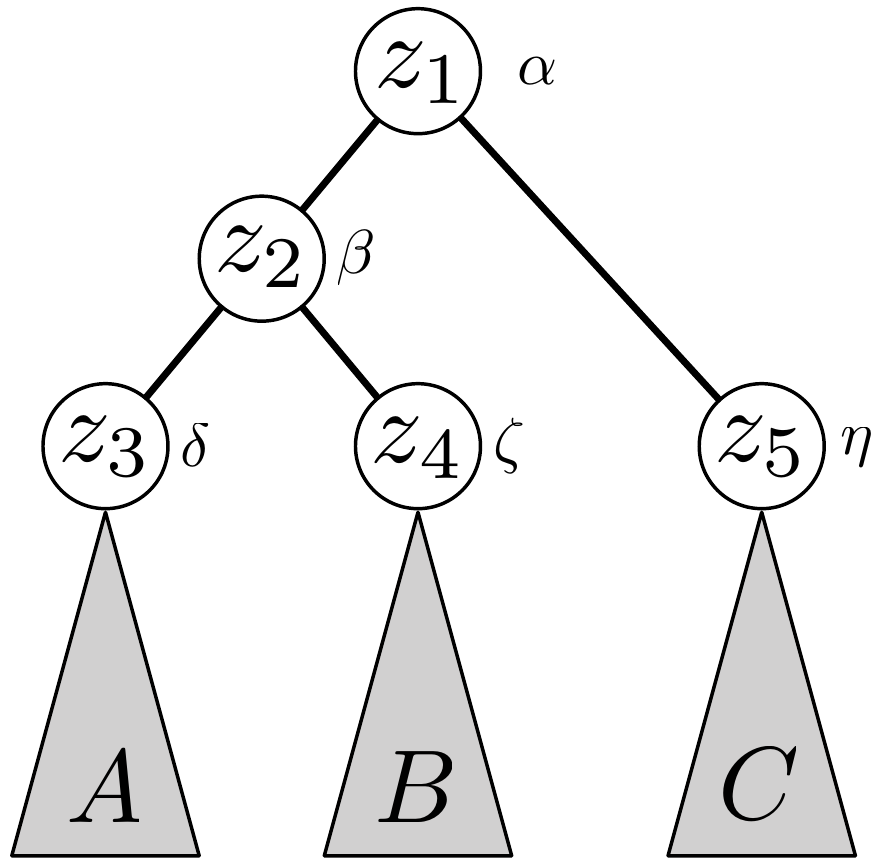}\qquad\qquad
\includegraphics[scale=.4]{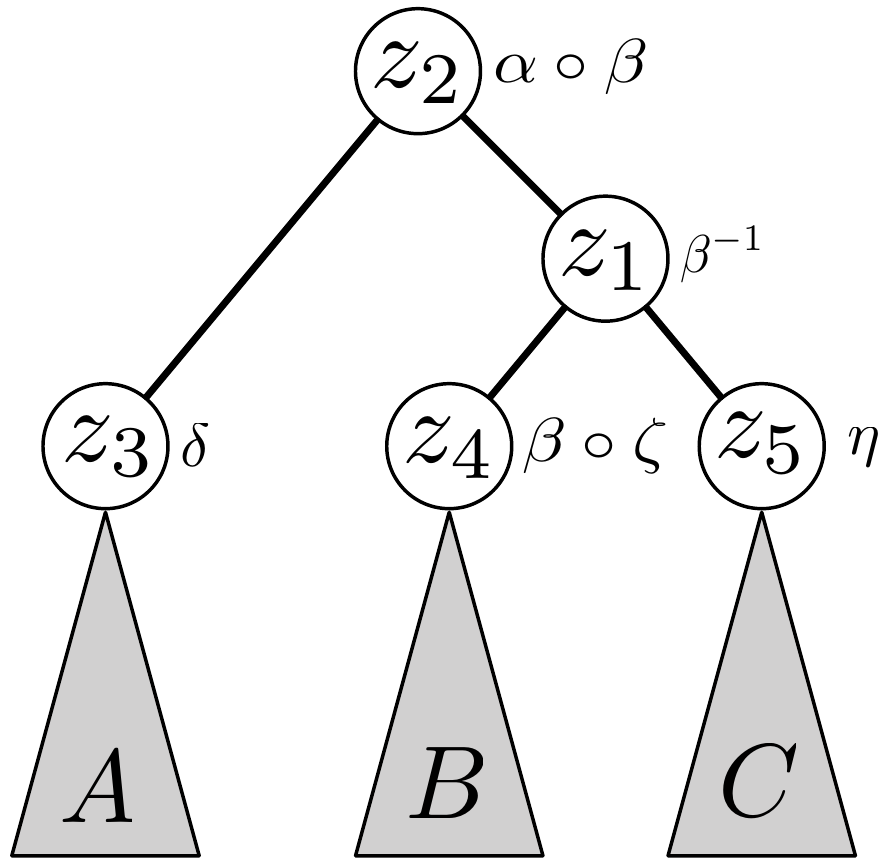}
\end{center}
\myCaption{\label{fig:rotation} Rotation in affine composition trees. The affine function stored at a node is shown as a greek letter to its right (left). When rotating the pair $(z_1, z_2)$, by changing these functions as shown (right) the set of values computed at the leaves remains unchanged. }
\end{figure}


To perform an $\itrans_p(\psi, \tau^-, \tau^+)$ query, we first find the nodes $\zz^-$ and $\zz^+$ storing the vertices $\pred_p(\tau^-)$ and $\succ_p(\tau^+)$, respectively. 
We then successively rotate $\zz^-$ with its parent until it becomes the root of the tree.  Next, we do the same with $\zz^+$ but stop when it becomes the right child of $\zz^-$.  At this stage, the subtree rooted at the left child of $\zz^+$ contains exactly all the vertices $\qq$ for which $q_p \in [\tau^-, \tau^+]$.  Thus we compose $\psi$ with the affine transformation at that node and issue the performed rotations in the reverse order to put the tree back in its original (balanced) position. Since $\zz^-$ and $\zz^+$ were both within $O(\log n)$ steps from the root of the tree, and since performing each rotation on the tree can only increase the depth of a node by one, $\zz^-$ is taken to the root in $O(\log n)$ steps and this increases the depth of $\zz^+$ by at most $O(\log n)$. Thus the whole operation takes $O(\log n)$ time.

We can augment $\TT(F)$ with additional information so that for any $a \in \b{R}$ the function $E(a) = E(b_1) + \int_{b_1}^a F(x) \, dx$, where $b_1$ is the leftmost breakpoint 
and $E(b_1)$ is value associated with $b_1$, 
can be computed in $O(\log n)$ time; we refer to this operation as $\integrate(a)$.  We provide the details in the Appendix.  
We summarize the above discussion:

\begin{theorem}\label{thm:ACT}
A continuous piecewise-linear monotonically increasing function 
$F$ with $n$ breakpoints can be maintained using a data structure 
$\TT(F)$ such that
\begin{enumerate}
\item $\eval$ and $\evali$ queries can be answered in $O(\log n)$ time,
\item an $\ins$ or a $\delete$ operation can be performed in $O(\log n)$ time, 
\item $\atrans$ and $\itrans$ operations can be performed in $O(1)$ and $O(\log n)$ time, respectively.
\item $\integrate$ operation can be performed in $O(\log n)$ time.  
\end{enumerate}
\end{theorem}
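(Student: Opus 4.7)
The plan is to verify each of the four items in the theorem by exhibiting the invariant and the implementation already sketched in the section, and to argue that a balanced (red--black) search tree with affine transformations stored at the nodes supports all five classes of operations in the claimed time. Throughout, the invariant is that the $i$th node $\zz$ in the in-order traversal of $\TT(F)$ corresponds to the $i$th vertex $\qq_i$ of $F$, and that $\qq_i = \Phi_\zz(\overline{0})$, where $\Phi_\zz$ is the composition of the locally stored affine transformations along the root-to-$\zz$ path. The slopes $\mu_-,\mu_+$ of the two unbounded pieces are stored separately at the root.

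First I would handle $\eval$ and $\evali$. Computing $\pred_p(a)$ and $\succ_p(a)$ amounts to a single descent from the root of $\TT$: at each visited node we compose the stored $\phi_\zz$ onto the running $\Phi$ and evaluate $\Phi(\overline{0})$ to read off the current vertex's $p$-coordinate, then branch left or right by comparison with $a$. Since $\TT$ is balanced this takes $O(\log n)$ time; linear interpolation between the returned neighbors (or, for boundary cases, using $\mu_-$ or $\mu_+$) gives $\eval(a)$ and $\evali(a)$ in $O(\log n)$ time, establishing (1).

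Next I would verify (3), since it underlies (2). For $\atrans(\psi)$, I replace the transformation $\phi_{\zz_0}$ at the root $\zz_0$ by $\psi\circ\phi_{\zz_0}$, which updates every $\Phi_\zz$ to $\psi\circ\Phi_\zz$ and therefore transforms every stored vertex by $\psi$; this is $O(1)$. The critical subclaim is that a single tree rotation can be realized in $O(1)$ by reassigning affine transformations in the constant number of nodes shown in Figure~\ref{fig:rotation}, because the inverse of an affine map $\qq\mapsto \amat\qq+\tvec$ is itself affine (using $\amat^{-1}$), so one can push a transformation across a parent-child edge and compensate on the sibling, leaving $\Phi_\zz(\overline{0})$ unchanged at every leaf. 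For $\itrans_p(\psi,\tau^-,\tau^+)$ I locate $\zz^-=\pred_p(\tau^-)$ and $\zz^+=\succ_p(\tau^+)$ in $O(\log n)$ time, rotate $\zz^-$ to the root and then rotate $\zz^+$ until it becomes the right child of $\zz^-$; at that moment the subtree rooted at the left child of $\zz^+$ is exactly the set of vertices with $p$-coordinate in $[\tau^-,\tau^+]$, and composing $\psi$ into the affine transformation stored there applies $\psi$ to precisely those vertices. Undoing the rotations in reverse restores balance. Because $\zz^-$ and $\zz^+$ start at depth $O(\log n)$ and each rotation raises one and lowers the other by one, the total number of rotations, and hence the total cost, is $O(\log n)$.

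For (2), I implement $\ins(\qq)$ by descending along the $x$-coordinate to the position where $\qq$ should appear; let $\uu$ be the prospective parent, whose $\Phi_\uu$ is accumulated during the descent. I then solve the $2\times 2$ linear system $\Phi_\uu(\phi_\zz(\overline{0}))=\qq$ for $\tvec_\zz$, taking $\amat_\zz$ to be the identity. This yields a new leaf $\zz$ whose recovered coordinates are exactly $\qq$; standard red--black rebalancing then uses $O(\log n)$ rotations, each of which is $O(1)$ by the preceding paragraph. Deletion is symmetric. Item (4) is an assertion that with additional aggregate information stored at each node, $\integrate(a)$ can be answered in $O(\log n)$ time; I would defer the details to the appendix, noting only that the affine invariant lets us express the signed area under any subtree's polyline as an affine functional of the stored transformation at the subtree's root, so the augmentation is maintained through rotations in $O(1)$.

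The main technical obstacle is verifying the $O(\log n)$ bound for $\itrans$: one must check that bringing $\zz^-$ to the root and then $\zz^+$ to be its right child actually isolates the target range as the left subtree of $\zz^+$ (this uses that the in-order positions of $\zz^-$ and $\zz^+$ bracket the vertices in $[\tau^-,\tau^+]$), and that the number of rotations telescopes to $O(\log n)$ rather than blowing up when $\zz^+$ is pushed down through the rearranged tree. Everything else reduces to the $O(1)$ rotation lemma enabled by invertibility of affine maps, combined with standard red--black tree analysis.
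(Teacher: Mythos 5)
Your proposal matches the paper's own argument point for point: the same invariant ($\qq_i = \Phi_\zz(\overline{0})$ with $\Phi_\zz$ the composition along the root-to-$\zz$ path), the same $O(1)$ rotation lemma via invertibility of affine maps, the same splay-to-root scheme for $\itrans$ with rotations undone afterwards, the same linear-system solve for $\ins$, and the same deferral of $\integrate$ to the augmented-tree appendix. The only stylistic difference is that you make the $\atrans$-at-the-root composition explicit (the paper leaves it implicit); your depth-accounting for $\itrans$ is phrased a bit loosely ("raises one and lowers the other") but correctly captures the paper's worst-case bound that each rotation of $\zz^-$ increases $\zz^+$'s depth by at most one, so the total work telescopes to $O(\log n)$.
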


One can use the above operations to compute the sum of two increasing continuous piecewise-linear functions $F$ and $G$ as follows: we first compute $F(b_i)$ for every breakpoint $b_i$ of $G$ and insert the pair $(b_i, F(b_i))$ into $\TT(F)$.  At this point  the tree still represents $\TT(F)$ but includes all the breakpoints of $G$ as {\em degenerate} breakpoints (at which the left and right derivates of $F$ are the same). Finally, for every consecutive pair of breakpoints $b_i$ and $b_{i+1}$ of $G$ we  apply an $\itrans_x(\psi_i, b_i, b_{i+1})$ operation on $\TT(F)$ where $\psi_i$ is the affine transformation $\qq \mapsto \amat\qq + \tvec$ where
$
\amat = \bigl(\begin{smallmatrix}1 & 0 \\ \alpha & 1 \end{smallmatrix}\bigr)
$ 
and 
$
\tvec = \bigl(\begin{smallmatrix}0 \\ \beta \end{smallmatrix}\bigr),
$ 
in which $G_i(x) = \alpha x + \beta$ is the linear function that interpolates between $G(b_i)$ at $b_i$ and $G(b_{i+1})$ at $b_{i+1}$ (similar operation using $\mu_-$ and $\mu_+$ of $G$ for the unbounded pieces of $G$ must can be applied in constant time). It is easy to verify that after performing this series of $\itrans$'s, $\TT(F)$ turns into $\TT(F+G)$. The total running time of this operation is $O(|G| \log |F|)$. Note that this runtime can be reduced to $O(|G| (1+ \log|F|/\log |G|))$,  for $|G| < |F|$, by using an algorithm of Brown and Tarjan~\cite{BT79} to insert all breakpoints and then applying all $\itrans$ operations in a bottom up manner.  Furthermore, this process can be reversed (i.e. creating $\TT(F-G)$ without the breakpoints of $G$, given $\TT(F)$ and $\TT(G)$) in the same runtime.  We therefore have shown:

\begin{lemma}
Given $\TT(F)$ and $\TT(G)$ for a piecewise-linear isotonic functions $F$ and $G$ where $|G| < |F|$, $\TT(F+G)$ or $\TT(F-G)$ can computed in $O(|G|(1+\log |F| /\log |G|))$ time.
\label{lem:F+G}
\end{lemma}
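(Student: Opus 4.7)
The plan is to give a direct algorithmic proof by explicitly constructing $\TT(F+G)$ (resp.\ $\TT(F-G)$) in two phases and then charging the total cost. Phase one augments $\TT(F)$ so that every breakpoint of $G$ also appears as a (so far degenerate) breakpoint of the stored function: for each breakpoint $b_i$ of $G$, compute $F(b_i)$ and insert $(b_i,F(b_i))$ into $\TT(F)$. Phase two walks the $x$-intervals determined by consecutive breakpoints of $G$ and, on each such interval $[b_i,b_{i+1}]$, applies $\itrans_x(\psi_i,b_i,b_{i+1})$ with
\[
\psi_i:\qq\mapsto\begin{pmatrix} 1 & 0 \\ \alpha_i & 1\end{pmatrix}\qq + \begin{pmatrix}0\\ \beta_i\end{pmatrix},
\]
where $G(x)=\alpha_ix+\beta_i$ is the linear piece of $G$ on that interval. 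The two unbounded intervals are treated identically using the slopes $\mu_-$ and $\mu_+$ of $G$, which also requires only a constant-time adjustment of the stored unbounded-piece slopes of $F$. Because $\psi_i$ leaves every $x$-coordinate fixed and adds $\alpha_i x+\beta_i$ to every $y$-coordinate in the strip $[b_i,b_{i+1}]$, the resulting chain is precisely the graph of $F+G$ and, since $F$ and $G$ are both isotonic, it remains $xy$-monotone — which is what the $\itrans$ operation requires.

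For the running-time bound, I would argue each phase separately. For phase one, the breakpoints of $G$ are already in sorted order, so I would use the Brown--Tarjan batch-insertion algorithm~\cite{BT79}, which inserts $k$ sorted keys into a balanced BST of size $n$ in $O(k(1+\log(n/k)))$ time; with $k=|G|$ and $n=|F|$ this matches the target bound. The $\eval$ calls needed to obtain $F(b_i)$ can be interleaved with the traversal used by the batch insertion, so they do not add to the asymptotic cost. For phase two, the naive cost is $O(|G|\log|F|)$ since each of the $|G|$ $\itrans$ operations takes $O(\log|F|)$; the improvement comes from observing that the $|G|$ ranges partition the $x$-axis into adjacent intervals, so the nodes touched by the rotations required to expose their canonical subtrees together form a union of $O(|G|)$ root-to-leaf paths in the augmented tree. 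A bottom-up sweep composes the appropriate $\psi_i$ onto each canonical subtree's root as it is visited, and the total work is proportional to the total length of these paths, which is again $O(|G|(1+\log|F|/\log|G|))$ by the standard bound on the union of $k$ root-to-leaf paths in a size-$n$ balanced BST.

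For the $\TT(F-G)$ direction, the same two-phase scheme runs in reverse order: first apply $\itrans_x(\psi_i^{-1},b_i,b_{i+1})$ on each interval, where $\psi_i^{-1}:(x,y)\mapsto(x,y-\alpha_i x-\beta_i)$ exists because the matrix $\bigl(\begin{smallmatrix}1&0\\ \alpha_i&1\end{smallmatrix}\bigr)$ is invertible, which cancels the linear contribution of $G$ on that interval; then batch-delete the breakpoints of $G$ (which are now truly degenerate in the stored function) using the deletion counterpart of~\cite{BT79}, yielding the same bound. The main technical obstacle I anticipate is verifying that the bottom-up composition in phase two preserves the affine-composition invariant of an ACT — i.e., that after all the rotations used to expose the canonical subtrees and all the compositions of $\psi_i$ onto their roots, the $\Phi_z$ along every root-to-leaf path still evaluates to the vertex that should be stored at $z$ in $\TT(F+G)$. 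This reduces, however, to the same correctness argument already used in the single $\itrans$ operation of Theorem~\ref{thm:ACT}, applied to the disjoint adjacent ranges one at a time, together with the fact that rotations preserve the values computed at the leaves (Figure~\ref{fig:rotation}).
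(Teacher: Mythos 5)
Your proposal follows the paper's own two-phase construction exactly: Brown--Tarjan batch insertion of $G$'s breakpoints as degenerate breakpoints of $F$, then a bottom-up sweep that composes the per-interval shears $\psi_i$ onto canonical subtrees, and the reverse order for $\TT(F-G)$. So the approach is the paper's approach, just spelled out in more detail than the paper bothers to give.

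One discrepancy is worth flagging, because it touches the statement you are actually asked to prove. The bound your argument produces in both phases is $O(|G|(1+\log(|F|/|G|)))$ --- that is what Brown--Tarjan gives, and that is also what the ``union of $k$ root-to-leaf paths in a size-$n$ balanced tree'' estimate gives (the latter is $O(k(1+\log(n/k)))$, not $O(k(1+\log n/\log k))$). The lemma as printed claims $O(|G|(1+\log|F|/\log|G|))$, which is a different and strictly smaller quantity: for $|G|=\Theta(\sqrt{|F|})$ it would be $O(\sqrt{|F|})$, which is below the $\Omega(|G|\log(|F|/|G|))$ information-theoretic cost of determining how $|G|$ new sorted breakpoints interleave among the $|F|$ stored ones, so the printed form cannot be achieved by any comparison-based merge. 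Almost certainly the lemma intends the Brown--Tarjan form $O(|G|(1+\log(|F|/|G|)))$; that is also the quantity the charging argument in the proof of Theorem~\ref{thm:LIRtree} actually compares against (the charge there is $2+\lceil\log(|\hF_{h(v)}|/|\hF_{l(v)}|)\rceil$, which dominates either form). Your proof therefore establishes what the paper needs, but the sentence asserting that $O(k(1+\log(n/k)))$ ``matches the target bound'' is not literally correct against the lemma as written; state the Brown--Tarjan form and leave it at that.
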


\paragraph{Tree sets.}
In our application we will be repeatedly computing the sum of two functions $F$ and $G$.  It will be too expensive to compute $\TT(F + G)$ explicitly using Lemma \ref{lem:F+G}, therefore we represent it implicitly.  More precisely, we use a \emph{tree set} $\c{S}(F) = \{\TT(F_1), \ldots, \TT(F_k)\}$ consisting of affine composition trees of  monotone piecewise-linear functions $F_1, \ldots, F_k$ to represent the function $F = \sum_{j=1}^k F_j$. We perform several operations on $F$ or $\c{S}$ similar to those of a single affine composition tree.  

$\eval(x)$ on $F$ takes $O(k \log n)$ time, by evaluating $\sum_{j=1}^k F_j(x)$.  
$\evali(y)$ on $F$ takes $O(k \log^2 n)$ time using a technique of Frederickson and Johnson~\cite{Frederickson:82}.

Given the ACT $\TT(F_0)$, we can convert $\c{S}(F)$ to $\c{S}(F + F_0)$ in two ways:
an $\include(\c{S}, F_0)$ operation sets $\c{S} = \{\TT(F_1), \ldots, \TT(F_k), \TT(F_0)\}$ in $O(1)$ time.
A $\merge(\c{S}, F_0)$ operations sets $\c{S} = \{\TT(F_1 + F_0), \TT(F_2), \ldots, \TT(F_k)\}$ in $O(|F_0| \log |F_1|)$ time.
We can also perform the operation $\uninclude(\c{S}, F_0)$ and $\unmerge(\c{S}, F_0)$ operations that reverse the respective above operations in the same runtimes.  

We can perform an $\atrans(\c{S}, \psi)$ where $\psi$ describes a linear transform $\amat$ and a translation vector $\tvec$.  To update $F$ by $\psi$ we update $F_1$ by $\psi$ and for $j \in [2,k]$ update $F_j$ by just $\amat$.  This takes $O(k)$ time.  It follows that we can perform $\itrans(\c{S}, \psi, \tau^-, \tau^+)$ in $O(k \log n)$ time, where $n = |F_1| + \ldots + |F_k|$.  Here we assume that the transformation $\psi$ is such that each $F_i$ remains monotone after the transformation.

\section{LIR on Paths}\label{sec:pathlir}
In this section we describe an algorithm for solving the \lir problem on a path, represented as a directed graph $P = (V, A)$ where $V = \{v_1, \dots, v_n\}$ and $A = \{(v_i, v_{i+1}): 1 \leq i < n\}$.   A function $s: V \rightarrow \mathbb{R}$ is isotonic (on $P$) if $s(v_i) \leq s(v_{i+1})$, and $\gamma$-Lipschitz for some real constant $\gamma$ if $s(v_{i+1}) \leq s(v_i) + \gamma$ for each $i = 1, \dots, n - 1$. 
For the rest of this section let $t: V \rightarrow \mathbb{R}$ be an input function on $V$. 
For each $i = 1, \dots, n$, let $V_i = \{v_1, \dots, v_i\}$, let $P_i$ be the subpath $v_1, \ldots, v_i$, and let $E_i$ and $\tilde E_i$, respectively, be shorthands for $E_{V_i}$ and $E_{P_i}^{v_i}$.
By definition, if we let $\tilde E_0 = 0$, then for each $i \geq 1$: 
\begin{equation}\label{eq3}
\tilde E_{i}(x)  = (x - t(v_{i}))^2 + \min_{x - \gamma \leq y \leq x} \tilde E_{i-1}(y).
\end{equation} 
By Lemma \ref{lem:convex-general}, $\tilde E_i$ is convex and continuous and thus has a unique minimizer $s^*_i$.

\begin{lemma}\label{lem:convex}
For $i \geq 1$, the function $\tilde E_i$ is given by the recurrence relation:
\begin{equation}\label{eqei}
\tilde E_{i}(x) = (x - t(v_i))^2 + \left\{\begin{array}{lll} \tilde E_{i-1}(x - \gamma) & \qquad & x > s^*_{i-1} + \gamma \\
\tilde E_{i-1}(s^*_{i-1}) & & x \in [s^*_{i-1}, s^*_{i-1}+\gamma] \\
\tilde E_{i-1}(x)&& x < s^*_{i-1}. \end{array}\right.
\end{equation}
\end{lemma}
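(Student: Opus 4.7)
The plan is to start from the basic recurrence
\[
\tilde E_i(x) = (x-t(v_i))^2 + \min_{y \in [x-\gamma,\,x]} \tilde E_{i-1}(y),
\]
which is immediate from equation (3), and reduce the proof to understanding the inner minimization. Since the additive term $(x-t(v_i))^2$ in the recurrence is independent of $y$, the whole task is to evaluate $M(x) := \min_{y \in [x-\gamma,\,x]} \tilde E_{i-1}(y)$ and show it matches the three-case expression in \eqref{eqei}.

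To evaluate $M(x)$, I would invoke Lemma~\ref{lem:convex-general} applied to $U = V_{i-1}$ and $v = v_{i-1}$: it guarantees that $\tilde E_{i-1}$ is continuous and strictly convex, so it has a unique minimizer $s^*_{i-1}$ (which is well-defined; for a coercive convex function this is straightforward, and $\tilde E_{i-1}$ is coercive because it dominates $(x-t(v_{i-1}))^2$). Strict convexity further implies that $\tilde E_{i-1}$ is strictly decreasing on $(-\infty, s^*_{i-1}]$ and strictly increasing on $[s^*_{i-1}, \infty)$.

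With this monotonicity in hand, I would split into three cases depending on the position of the window $[x-\gamma, x]$ relative to $s^*_{i-1}$:
\begin{itemize}
\item If $x < s^*_{i-1}$, the entire interval $[x-\gamma, x]$ lies to the left of $s^*_{i-1}$, where $\tilde E_{i-1}$ is decreasing, so the minimum is attained at the right endpoint $y = x$, giving $M(x) = \tilde E_{i-1}(x)$.
\item If $x > s^*_{i-1} + \gamma$, the interval lies entirely to the right of $s^*_{i-1}$, where $\tilde E_{i-1}$ is increasing, so the minimum is at the left endpoint $y = x - \gamma$, giving $M(x) = \tilde E_{i-1}(x-\gamma)$.
\item If $x \in [s^*_{i-1},\, s^*_{i-1}+\gamma]$, then $s^*_{i-1} \in [x-\gamma, x]$, so the unconstrained minimum is inside the window, and $M(x) = \tilde E_{i-1}(s^*_{i-1})$.
\end{itemize}
Adding $(x-t(v_i))^2$ to each case yields exactly \eqref{eqei}.

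I do not expect any serious obstacle: the only subtle point is justifying the existence and uniqueness of $s^*_{i-1}$ from strict convexity plus coercivity, and checking that the two boundary values of $x$ in the three cases produce consistent values of $M(x)$ (i.e., continuity of $M$ at $x = s^*_{i-1}$ and $x = s^*_{i-1}+\gamma$), which follows immediately from continuity of $\tilde E_{i-1}$. Everything else is a direct case analysis driven by the monotonicity behavior of a strictly convex univariate function on either side of its minimizer.
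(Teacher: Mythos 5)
Your proof is correct and follows essentially the same route as the paper's: reduce to the inner minimization $M(x)=\min_{y\in[x-\gamma,x]}\tilde E_{i-1}(y)$, invoke strict convexity of $\tilde E_{i-1}$ from Lemma~\ref{lem:convex-general} to establish the decreasing/increasing behavior around $s^*_{i-1}$, and split into the same three cases according to where the window $[x-\gamma,x]$ sits relative to $s^*_{i-1}$. The only difference is cosmetic: the paper phrases this as an induction on $i$ (which is actually unnecessary given Lemma~\ref{lem:convex-general} already applies at every $i$), while you explicitly flag and fill the small gap about existence of the minimizer via coercivity — a point the paper glosses over.
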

\begin{proof}
The proof is by induction on $i$. $\tilde E_1$ is clearly a single-piece quadratic function. For $i > 1$, since $\tilde E_{i-1}$ is strictly convex, it is strictly decreasing on $(-\infty, s^*_{i-1})$ and strictly increasing on $(s^*_{i-1}, +\infty)$. Thus depending on whether $s^*_{i-1} < x - \gamma$, $s^*_{i-1} \in [x - \gamma, x]$, or $s^*_{i-1} > x$, the value $y \in [x - \gamma, x]$ that minimizes $\tilde E_i(y)$ is $x-\gamma$, $s^*_i$, and $x$, respectively. 
\end{proof}

\begin{figure}
\begin{center}
\includegraphics[scale=.9]{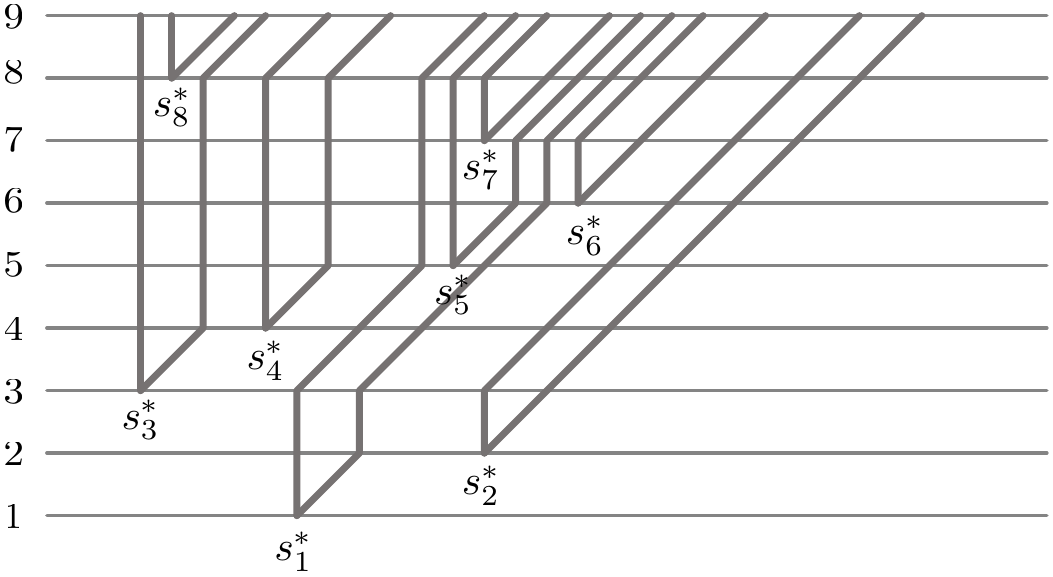}
\end{center}
\myCaption{\label{fig:LIR-breakpoints} The breakpoints of the function $F_i = d \tilde{E}_i / dx$. For each $i$, $s^*_{i-1}$ and $s^*_{i-1}+\gamma$ are the ``new'' breakpoints of $F_i$. All other breakpoints of $F_i$ come from $F_{i-1}$ where those smaller than $s^*_i$ remain unchanged and those larger are increased by $\gamma$.}
\end{figure}

Thus by Lemmas \ref{lem:convex-general} and \ref{lem:convex}, $\tilde E_i$ is strictly convex and piecewise quadratic. 
We call a value $x$ that determines the boundary of two neighboring quadratic pieces of $\tilde E_i$ a {\em breakpoint} of $\tilde E_i$. Since $\tilde E_1$ is a simple (one-piece) quadratic function, it has no breakpoints. For $i > 1$, the  breakpoints of the function $\tilde E_i$ consist of $s^*_{i-1}$ and $s^*_{i-1} + \gamma$, as determined by recurrence (\ref{eqei}), together with breakpoints that arise from recursive applications of $\tilde E_{i-1}$. 
Examining equation (\ref{eqei}) reveals that all breakpoints of $\tilde E_{i-1}$ that are smaller than $s^*_{i-1}$ remain breakpoints in $\tilde E_i$ and all those larger than $s^*_{i-1}$ are increased by $\gamma$ and these form all of the breakpoints of $\tilde E_i$ (see Figure \ref{fig:LIR-breakpoints}).  Thus $\tilde E_i$ has precisely $2i - 2$ breakpoints. 
To compute the point $s^*_i$ at which $\tilde E_i(x)$ is minimized, it is enough to scan over these $O(i)$ quadratic pieces and find the unique piece whose minimum lies between its two ending breakpoints. 
%

\begin{lemma}\label{lem:backsolve}
Given the sequence $s^*_1, \dots, s^*_n$, one can compute the $\gamma$-LIR $s$ of input function $t$ in $O(n)$ time. 
\end{lemma}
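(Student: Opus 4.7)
The plan is to observe that the sequence $s^*_1, \dots, s^*_n$ together with the recurrence for $\tilde E_i$ from Lemma~\ref{lem:convex} carries enough information to reconstruct the optimal assignment by a single backward pass, one vertex at a time.

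First I would argue that $s(v_n) = s^*_n$. By definition $\tilde E_n(x)$ is the minimum of $E_n(s)$ over all $\gamma$-Lipschitz isotonic $s$ with $s(v_n) = x$, so the $\gamma$-LIR must satisfy $s(v_n) = \arg\min_x \tilde E_n(x) = s^*_n$. This gives us the last value of $s$ for free from the precomputed sequence.

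Next I would show how to propagate backwards. Suppose $s(v_i)$ has been fixed to some value $x$. Equation~(\ref{eq3}) shows that the optimal choice of $s(v_{i-1})$ must be the minimizer of $\tilde E_{i-1}(y)$ on the interval $[x-\gamma, x]$. Since $\tilde E_{i-1}$ is strictly convex with unique minimum at $s^*_{i-1}$ (Lemma~\ref{lem:convex-general}), this constrained minimizer is
\[
s(v_{i-1}) =
\begin{cases}
x-\gamma & \text{if } s^*_{i-1} < x-\gamma, \\
s^*_{i-1} & \text{if } s^*_{i-1} \in [x-\gamma, x], \\
x & \text{if } s^*_{i-1} > x.
\end{cases}
\]
This is precisely the case analysis that was used to derive Lemma~\ref{lem:convex}, applied in reverse: the three cases correspond to the three pieces of the recurrence (\ref{eqei}). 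Each assignment takes $O(1)$ time once $s(v_i)$ and $s^*_{i-1}$ are known.

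The algorithm is therefore to set $s(v_n) \gets s^*_n$ and then, for $i = n, n-1, \dots, 2$, compute $s(v_{i-1})$ from $s(v_i)$ and $s^*_{i-1}$ by the three-case rule above. This performs $n-1$ constant-time steps after the initial assignment, yielding total running time $O(n)$. Correctness follows because at each step the chosen $s(v_{i-1})$ realizes the minimum in the defining recurrence for $\tilde E_i\bigl(s(v_i)\bigr)$, so the reconstructed $s$ attains $E_n(s) = \tilde E_n(s^*_n) = \min_{s' \in \Gamma(P,V)} E_V(s')$. I do not anticipate a genuine obstacle here; the only mildly delicate point is confirming that the backward rule is well defined when $s^*_{i-1}$ lies exactly at an endpoint of $[x-\gamma,x]$, but in those boundary cases two of the three options coincide and either choice yields the same value.
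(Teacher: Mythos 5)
Your proof is correct and follows essentially the same route as the paper: fix $s(v_n)=s^*_n$, then back-solve by choosing $s(v_{i-1})$ as the constrained minimizer of $\tilde E_{i-1}$ on $[s(v_i)-\gamma, s(v_i)]$, which strict convexity resolves into the same three-case $O(1)$ rule. The only differences are notational (indexing from $i$ to $i-1$ rather than $i+1$ to $i$) and your explicit remark about the boundary case, which the paper leaves implicit.
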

%
\begin{proof}
For each $i = 1, \dots, n$, let $\sigma_{i} = \argmin_{x \in \Gamma(P, {V_i})} E_i(x)$, i.e. $\sigma_i$ is the $\gamma$-LIR of $\tau_i \in \mathbb{R}^{V_i}$ where $\tau_i(v) = t(v)$ for all $v \in V_i$ and in particular $\sigma_n = s$ where $s$ is the $\gamma$-LIR of $t$. From the definitions we get \[E_i(\sigma_i) = \min_{x \in \mathbb{R}} \tilde E_i(x).\]
The uniqueness of the $\sigma_i$ and $s^*_i$ implies that $\sigma_i(v_i) = s^*_i$. In particular, $\sigma_n(v_n) = s(v_n) = s^*_n$. Suppose that the numbers $s(v_{i+1}), \dots, s(v_{n})$ are determined, for some $1 \leq i \leq n - 1$, and we wish to determine $s(v_i)$. Picking $s(v_i) = x$ entails that the energy of the solution will be at least 
\[
\tilde E_i(x) + \sum_{j = i+1}^n (s(v_j) - t(v_j))^2.
\]
Thus $s(v_i)$ has to be chosen to minimize $\tilde E_i$ on the interval $[s(v_{i+1}) - \gamma, s(v_{i+1})]$. If $s^*_i \in [s(v_{i+1}) - \gamma, s(v_{i+1})]$, then we simply have $s(v_i) = s^*_i$ which is the global minimum for $\tilde E_i$. Otherwise, one must pick $s(v_i) = s(v_{i+1}) - \gamma$ if $s^*_i < s(v_{i+1}) - \gamma$ and $s(v_i) = s(v_{i+1})$ if $s^*_i > s(v_{i+1})$. Thus each $s(v_i)$ can be found in $O(1)$ time for each $i$ and the entire process takes $O(n)$ time.
\end{proof}

One can compute the values of $s^*_1, \dots, s^*_n$ in $n$ iterations. The $i$th iteration computes the value $s^*_i$ at which $\tilde E_i$ is minimized and then uses it to compute the function $\tilde E_{i+1}$ as given by (\ref{eqei}) in $O(i)$ time.  After having computed $s_i^*$'s, the $\gamma$-LIR of $t \in \mathbb{R}^V$ can be computed in linear time.  
However, this gives an $O(n^2)$ algorithm for computing the $\gamma$-LIR of $t$. We now show how this running time can be reduced to $O(n \log n)$.


For the sake of simplicity, we assume for the rest of this paper that 
for each $2 \leq i \leq n$, $s^*_i$, i.e., the point minimizing 
$\tilde E_i$ is none of its breakpoints, although it is not hard to 
relax this assumption algorithmically. 
Under this assumption, $s^*_i$ belongs to the interior of some interval on which $\tilde E_i$ is quadratic. The derivative of this quadratic function is therefore zero at $s^*_i$. In other words, if we know to which quadratic piece of $\tilde E_i$ the point $s^*_i$ belongs, we can determine $s^*_i$ by setting the derivative of that piece to zero. 

\begin{lemma}\label{lem:monotone}
The derivative of $\tilde E_i$ is a continuous monotonically increasing piecewise-linear function. 
\end{lemma}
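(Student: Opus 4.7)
The plan is to prove the three claimed properties (piecewise-linearity, continuity, and monotonic increase) of $\tilde E'_i$ by induction on $i$, leaning on the structural description of $\tilde E_i$ given by Lemma \ref{lem:convex} and the strict convexity of $\tilde E_i$ given by Lemma \ref{lem:convex-general}.

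For the base case $i=1$, we have $\tilde E_1(x) = (x - t(v_1))^2$, whose derivative $2(x-t(v_1))$ is a single linear piece and is manifestly continuous and strictly increasing. For the inductive step, assume that $\tilde E'_{i-1}$ is continuous, piecewise linear, and (strictly) monotonically increasing. Piecewise linearity of $\tilde E'_i$ follows immediately from Lemma \ref{lem:convex}, which expresses $\tilde E_i$ as a sum of the quadratic $(x - t(v_i))^2$ with one of three branches (a shift of $\tilde E_{i-1}$, a constant, or $\tilde E_{i-1}$ itself) each of which is piecewise quadratic with finitely many breakpoints. Monotonicity of $\tilde E'_i$ is a direct consequence of the strict convexity of $\tilde E_i$, which was established in Lemma \ref{lem:convex-general}.

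The only nontrivial point is continuity of $\tilde E'_i$. Inside each of the three branches of the recurrence (\ref{eqei}), $\tilde E'_i$ is continuous because it equals $2(x-t(v_i))$ plus a continuous function (by the inductive hypothesis on $\tilde E'_{i-1}$, or zero for the flat branch). So the only points at which continuity could potentially fail are the two new breakpoints $s^*_{i-1}$ and $s^*_{i-1}+\gamma$ that join adjacent branches. At $x = s^*_{i-1}$, the left branch contributes derivative $2(x-t(v_i)) + \tilde E'_{i-1}(x)$, whose value at $s^*_{i-1}$ is $2(s^*_{i-1}-t(v_i))+\tilde E'_{i-1}(s^*_{i-1})$, while the middle (flat) branch contributes derivative $2(x-t(v_i))$, whose value at $s^*_{i-1}$ is $2(s^*_{i-1}-t(v_i))$. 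Because $s^*_{i-1}$ is the (interior) minimizer of $\tilde E_{i-1}$, we have $\tilde E'_{i-1}(s^*_{i-1}) = 0$, and so the two one-sided derivatives agree. An entirely symmetric calculation at $x = s^*_{i-1}+\gamma$, again using $\tilde E'_{i-1}(s^*_{i-1})=0$, shows that the middle and right branches match up at that breakpoint as well.

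The main thing to be careful about is this matching at the two new breakpoints; once it is observed that the vanishing of $\tilde E'_{i-1}$ at $s^*_{i-1}$ is exactly what makes the seams close up, the rest of the argument is routine. Combining the three properties verified above completes the induction and proves the lemma.
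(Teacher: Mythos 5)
Your proof is correct and follows essentially the same approach as the paper's: monotonicity of $\tilde E'_i$ is read off from the strict convexity and piecewise-quadratic structure established in Lemmas~\ref{lem:convex-general} and~\ref{lem:convex}, and continuity is reduced (by induction) to a check at the two new breakpoints $s^*_{i-1}$ and $s^*_{i-1}+\gamma$, where the matching of one-sided derivatives hinges precisely on the vanishing of $\tilde E'_{i-1}$ at its minimizer $s^*_{i-1}$. Your write-up is in fact somewhat more explicit than the paper's (e.g., in stating the one-sided limits at each seam and in invoking the inductive hypothesis to justify differentiability at $s^*_{i-1}$), but the key idea is identical.
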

%
\begin{proof}
Since by Lemma \ref{lem:convex}, $\tilde E_i$ is strictly convex and piecewise quadratic, its derivative  is monotonically increasing. To prove the continuity of the derivative, it is enough to verify this at the ``new'' breakpoints of $\tilde E_i$, i.e. at $s^*_i$ and $s^*_i + \gamma$. The continuity of the derivative can be argued inductively by observing that both mappings of the breakpoints from $\tilde E_{i - 1}$ to $\tilde E_i$ (identity or shifting by $\gamma$) preserve the continuity of the function and its derivative. 

Notice that at $s^*_{i-1}$, both the left and right derivatives of a quadratic piece of $\tilde E_{i-1}$ becomes zero. From the definition of $\tilde E_i$ in (\ref{eqei}), the left derivative of the function $\tilde E_i - (x - t_{i+1})^2$ at $s^*_{i-1}$ agrees with the left derivative of $\tilde E_{i-1}$ at the same point and is therefore zero. On the other hand, the right derivative of the function $\tilde E_i - (x - t_{i+1})^2$ is zero at $s^*_{i-1} + \gamma$ and agrees with its  right derivative at the same point as the function is constant on the interval $[s^*_{i-1}, s^*_{i-1}+\gamma]$. This means that the derivative of $\tilde E_i - (x - t_{i+1})^2$ is continuous at $s^*_{i-1}$ and therefore the same holds for $\tilde E_i$. A similar argument establishes the continuity of $\tilde E_i$ at $s^*_{i-1} + \gamma$.
\end{proof} 

Let $F_i$ denote the derivative of $\tilde E_i$.
Using (\ref{eqei}), we can write the following recurrence for $F_i$:
\begin{equation}\label{eqfi}
F_{i+1} = 2(x-t(v_{i+1})) + \hF_i(x)
\end{equation}
where 
\begin{equation}\label{eq:hatF}
\hF_i(x) = 
\left\{\begin{array}{lll}F_i(x - \gamma) & \qquad & x > s^*_i + \gamma,\\
0 & & x \in [s^*_i, s^*_i + \gamma], \\
F_i(x) && x< s^*_i. \end{array} \right.
\end{equation}
if we set $F_0 = 0$. As mentioned above, $s^*_i$ is simply the solution of $F_i(x) = 0$, which, by Lemma \ref{lem:monotone} always exists and is unique. 
Intuitively, $\hF_i$ is obtained from $F_i$ by splitting it at $s_i^*$, shifting the right part by $\gamma$, and connecting the two pieces by a horizontal edge from $s^*_i$ to $s^*_i + \gamma$ (lying on the $x$-axis).  


In order to find $s^*_i$ efficiently, we use an ACT $\TT(F_i)$ to represent $F_i$.  It takes $O(\log |F_i|) = O(\log i)$ time to compute $s^*_i = \evali(0)$ on $\TT(F_i)$.  
Once $s^*_i$ is computed, we store it in a separate array for back-solving through Lemma~\ref{lem:backsolve}.  
We turn $\TT(F_i)$ into $\TT(\hF_{i})$ by performing a sequence of
$\ins((s_i^*,0))$, $\itrans_x(\psi, s_i^*, \infty)$, and $\ins((s_i^*,0))$ operations on $\TT(F_i)$ where 
$
\psi(q) = q + \bigl(\begin{smallmatrix} \gamma \\ 
		0\end{smallmatrix}\bigr);
$
the two insert operations add the breakpoints at $s_i^*$ and $s_i^*+\gamma$ and the interval operation shifts the portion of $F_i$ to the right of $s_i^*$ by $\gamma$.
We then turn $\TT(\hF_i)$ into $\TT(F_{i+1})$ by performing $\atrans(\phi_{i+1})$ operation on $\TT(\hF_i)$ where $\phi_{i+1}(q) = \amat q + \tvec$ where 
$\amat = \bigl(\begin{smallmatrix} 1 & 0 \\ 2 & 1\end{smallmatrix}\bigr)$ 
and 
$\tvec = \bigl(\begin{smallmatrix}0 \\ -2t(v_{i+1})\end{smallmatrix}\bigr)$.


Given ACT $\TT(F_i)$, $s_i^*$ and $\TT(F_{i+1})$ can be computed in $O(\log i)$ time.  Hence, we can compute $s_1^*, \ldots, s_n^*$ in $O(n \log n)$ time.  By Lemma \ref{lem:backsolve}, we can conclude the following.

\begin{theorem}
Given a path $P = (V, A)$, a function $t \in \mathbb{R}^V$, and a constant $\gamma$, the $\gamma$-Lipschitz isotonic regression of $t$ on $P$ can be found in $O(n \log n)$ time.
\end{theorem}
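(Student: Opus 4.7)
The plan is to package together the pieces already developed: the recurrence in Lemma~\ref{lem:convex} for $\tilde E_i$, its derivative formulation in Lemma~\ref{lem:monotone} and equations (\ref{eqfi})--(\ref{eq:hatF}), the ACT data structure of Theorem~\ref{thm:ACT}, and the back-solving procedure of Lemma~\ref{lem:backsolve}.  The key observation is that because $F_i = d\tilde E_i/dx$ is a continuous monotonically increasing piecewise-linear function, it admits an ACT representation, and each transition $F_i \mapsto \hat F_i \mapsto F_{i+1}$ prescribed by (\ref{eqfi}) and (\ref{eq:hatF}) can be implemented by a constant number of ACT primitives.

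First I would process the vertices $v_1, \dots, v_n$ left to right, maintaining the invariant that $\TT(F_i)$ is an ACT representing $F_i$.  At iteration $i$, I compute $s_i^* = \evali(0)$ on $\TT(F_i)$ in $O(\log i)$ time (using Lemma~\ref{lem:monotone} to guarantee a unique solution) and store it in an array.  To turn $\TT(F_i)$ into $\TT(\hat F_i)$, I insert the breakpoint $(s_i^*, 0)$, apply the interval transform
\[
\itrans_x\!\left(\psi, s_i^*, +\infty\right), \qquad \psi(\qq) = \qq + \bigl(\begin{smallmatrix}\gamma\\0\end{smallmatrix}\bigr),
\]
which shifts the right portion of $F_i$ by $\gamma$ along the $x$-axis, and then insert a second copy of $(s_i^*, 0)$ to form the horizontal segment on the $x$-axis from $s_i^*$ to $s_i^* + \gamma$ predicted by (\ref{eq:hatF}).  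Then I convert $\TT(\hat F_i)$ to $\TT(F_{i+1})$ using a single $\atrans(\phi_{i+1})$ with
\[
\phi_{i+1}(\qq) = \bigl(\begin{smallmatrix}1 & 0 \\ 2 & 1\end{smallmatrix}\bigr)\qq + \bigl(\begin{smallmatrix}0 \\ -2t(v_{i+1})\end{smallmatrix}\bigr),
\]
which realizes (\ref{eqfi}).  By Theorem~\ref{thm:ACT}, each of these steps costs $O(\log i)$, so the whole forward pass uses $O(n \log n)$ time and produces the sequence $s_1^*, \dots, s_n^*$.

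Finally I invoke Lemma~\ref{lem:backsolve}: given the stored $s_i^*$'s, the $\gamma$-LIR of $t$ is recovered in an additional $O(n)$ time by scanning from $v_n$ back to $v_1$ and clipping each $s_i^*$ into the feasible interval $[s(v_{i+1}) - \gamma, s(v_{i+1})]$.  Summing gives the claimed $O(n \log n)$ bound.

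The main obstacle is purely one of verification: I must check that $F_i$ is indeed $xy$-monotone at every step (so that the ACT invariants are not violated) and that the three ACT operations $\ins$, $\itrans_x$, and $\atrans$ really compose to produce $\hat F_i$ and then $F_{i+1}$ exactly as specified by (\ref{eqfi}) and (\ref{eq:hatF}).  Monotonicity follows from Lemma~\ref{lem:monotone} together with the fact that shifting a monotone function horizontally by $\gamma$ and inserting a constant flat piece preserves monotonicity, while the affine map $\phi_{i+1}$ adds a linear function of slope $2$ to the graph and therefore only increases slopes.  The correctness of the affine composition $\phi_{i+1}$ is a direct reading of (\ref{eqfi}).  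Assuming $s_i^*$ is never itself a breakpoint of $F_i$ (the genericity assumption already made in the text), no additional case analysis is needed, completing the proof.
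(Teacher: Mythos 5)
Your proposal is correct and follows essentially the same approach as the paper: a forward pass over $v_1,\dots,v_n$ maintaining $\TT(F_i)$ via exactly the same three ACT operations ($\ins$, $\itrans_x$ by $\gamma$, $\ins$) to realize~(\ref{eq:hatF}) and the same $\atrans(\phi_{i+1})$ to realize~(\ref{eqfi}), followed by back-solving via Lemma~\ref{lem:backsolve}. The one small extra the paper does not spell out but you verify explicitly—that the chain stays $xy$-monotone throughout—is a nice touch but does not change the argument.
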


\paragraph{$\Update$ operation.}
We define a procedure $\Update(\TT(\hF_i), t(v_{i+1}), \gamma)$ that encapsulates the process of turning $\TT(\hF_i)$ into $\TT(\hF_{i+1})$ and returning $s_{i+1}^*$.  
Specifically, it performs $\atrans(\phi_{i+1})$ of $\TT(\hF_i)$ to produce $\TT(F_{i+1})$, then it outputs $s_{i+1}^* = \evali(0)$ on $\TT(F_{i+1})$, and finally a sequence of $\ins((s_{i+1}^*,0))$, $\itrans(\psi, s_{i+1}^*, \infty)$, 
and $\ins((s_{i+1}^*,0))$ operations on $\TT(F_{i+1})$ to 
get $\TT(\hF_{i+1})$.  
Performed on $\TT(F)$ where $F$ has $n$ breakpoints, an $\Update$ takes $O(\log n)$ time.  
An $\UnUpdate(\TT(\hF_{i+1}), t(v_{i+1}), \gamma)$ reverts the affects of an $\Update(\TT(\hF_i), t(v_{i+1}), \gamma)$.  This requires that $s_{i+1}^*$ is stored for the reverted version.  
Similarly, we can perform $\Update(\c{S}, t(v_i), \gamma)$ and $\UnUpdate(\c{S}, t(v_i), \gamma)$ on a tree set $\c{S}$, in $O(k \log^2 n)$ time, the bottleneck coming from $\evali(0)$.

\begin{lemma}\label{lem:tree-update}
Given $\TT(\hF_i)$, $\Update(\TT(\hF_i), t(v_i), \gamma)$ and $\UnUpdate(\TT(\hF_i), t(v_i), \gamma)$ take $O(\log n)$ time.
\end{lemma}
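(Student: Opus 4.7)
The plan is to bound the cost of each individual step of $\Update$ and $\UnUpdate$ by appealing to Theorem~\ref{thm:ACT}, and then sum a constant number of such costs. The procedure $\Update(\TT(\hF_i), t(v_{i+1}), \gamma)$ performs the following in order: (i) one $\atrans(\phi_{i+1})$ on $\TT(\hF_i)$ to obtain $\TT(F_{i+1})$; (ii) one $\evali(0)$ query to recover $s_{i+1}^*$ (which is stored in an auxiliary array for later back-solving and for use by $\UnUpdate$); (iii) two $\ins$ operations that insert the breakpoints $(s_{i+1}^*,0)$ and $(s_{i+1}^*+\gamma,0)$; and (iv) one $\itrans_x(\psi, s_{i+1}^*, \infty)$ that shifts the portion of $F_{i+1}$ to the right of $s_{i+1}^*$ horizontally by $\gamma$. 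Theorem~\ref{thm:ACT} bounds (i) by $O(1)$ and each of (ii)--(iv) by $O(\log n)$, so the total is $O(\log n)$.

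For $\UnUpdate(\TT(\hF_{i+1}), t(v_{i+1}), \gamma)$, I would reverse each step of $\Update$ using the stored value of $s_{i+1}^*$. Namely: (iv$'$) apply $\itrans_x(\psi^{-1}, s_{i+1}^*, \infty)$, where $\psi^{-1}$ is the translation by $-\gamma$ in the $x$-direction, to undo the shift; (iii$'$) perform two $\delete$ operations removing the breakpoints $s_{i+1}^*$ and $s_{i+1}^*+\gamma$; and (i$'$) apply $\atrans(\phi_{i+1}^{-1})$ to recover $\TT(\hF_i)$. Each of these is a primitive ACT operation, and Theorem~\ref{thm:ACT} again gives a total cost of $O(\log n)$.

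The only subtle point is verifying that all the affine transformations used are invertible so that their inverses can indeed be applied inside an ACT. For $\phi_{i+1}$ the linear part is $\bigl(\begin{smallmatrix}1 & 0\\ 2 & 1\end{smallmatrix}\bigr)$, which has determinant $1$, and $\psi$ is a pure horizontal translation; both are therefore invertible, and their inverses are themselves affine transformations. Moreover, since the forward operations preserve $xy$-monotonicity of the underlying chain (as already established for $\Update$), the reverse operations applied to the result restore the chain through $xy$-monotone intermediate stages. I do not anticipate a true obstacle here: the lemma is essentially a bookkeeping consequence of Theorem~\ref{thm:ACT}, and the main thing to check is that exactly the same stored value $s_{i+1}^*$ is consumed by $\UnUpdate$ that was produced by $\Update$.
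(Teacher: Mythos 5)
Your proposal is essentially the same approach as the paper's: decompose $\Update$ into a constant number of primitive ACT operations ($\atrans$, $\evali$, $\ins$, $\itrans$), invoke Theorem~\ref{thm:ACT} to bound each by $O(\log n)$, and observe that $\UnUpdate$ applies the inverse operations (which are again primitive ACT operations with the same cost) using the stored $s_{i+1}^*$. The paper's proof spends most of its space spelling out the $2\times 2$ matrix and translation-vector form of $\phi_{i+1}$ and the case analysis $(\ref{eqxi})$--$(\ref{eqfxi})$ that justifies the $\itrans$ step, i.e.\ it is primarily verifying that these operations really implement the recurrence~$(\ref{eq:hatF})$; the running time is then immediate. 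You instead foreground the cost-bounding via Theorem~\ref{thm:ACT}, which is arguably the more direct reading of the lemma as stated, and you add the (correct) observation that $\det\phi_{i+1}=1$ so the inverse is an affine map. One small slip worth flagging: you describe $\Update$ as doing both $\ins$ operations first and then the $\itrans$. The paper's actual order is $\ins((s_{i+1}^*,0))$, then $\itrans_x(\psi, s_{i+1}^*, \infty)$, then $\ins((s_{i+1}^*,0))$ again, and this order matters for correctness: the point $(s_{i+1}^*+\gamma,0)$ generally does not lie on the graph of $F_{i+1}$, so inserting it directly would violate the $\ins$ precondition; instead the first inserted point is \emph{carried} to $(s_{i+1}^*+\gamma,0)$ by the horizontal shift, and then the gap at $s_{i+1}^*$ is refilled. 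This doesn't change the $O(\log n)$ count, but the sequence of inverse operations in $\UnUpdate$ should likewise be $\delete$, $\itrans$ (shift by $-\gamma$), $\delete$, then $\atrans(\phi_{i+1}^{-1})$ rather than the order you wrote.
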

%
\begin{proof}
Since $F_{i+1}(x) = 2(x-t(v_{i+1})) + \hF_{i}(x)$, for a breakpoint $\hat{b}_{i}$ of $\hF_{i}$, we get:
\[
\left(\begin{array}{c}b_{i+1} \\ F_{i+1}(b_{i+1})\end{array}\right) = \left(\begin{array}{cc} 1 & 0 \\ 2 & 1 \end{array}\right)\left(\begin{array}{c} \hat{b}_{i} \\ \hF_{i}(\hat{b}_{i}) \end{array}\right) + 
\left(\begin{array}{c} 0 \\ - 2t(v_{i+1}) \end{array}\right), 
\]
where $\hat{b}_{i}$ becomes breakpoint $b_{i+1}$.

We can now compute $s^*_{i+1} = \evali(0)$ on $\TT(F_{i+1})$.

Rewriting (\ref{eq:hatF}) for a breakpoint $b_{i+1}$ of $F_{i+1}$ that is neither of $s^*_{i+1}$ or $s^*_{i+1} + \gamma$, using the fact that 
\begin{equation}\label{eqxi}
\hat{b}_{i+1} = \left\{\begin{array}{lll}b_{i+1} + \gamma & \qquad & b_{i+1} > s^*_{i+1}\\
b_{i+1} & &b_{i+1} < s^*_{i+1}
\end{array}\right.,
\end{equation}
where $b_{i+1}$ is the breakpoint of $F_{i+1}$ that has become $\hat{b}_{i+1}$ in $\hF_{i+1}$, we get:
\begin{equation}\label{eqfxi}
\hF_{i+1}(\hat{b}_{i+1}) =  \left\{\begin{array}{lll} F_{i+1}(b_{i+1}) & \qquad & b_{i+1} > s^*_{i+1} \\
F_{i+1}(b_{i+1}) && b_{i+1}< s^*_{i+1}. \end{array} \right.
\end{equation}
We next combine and rewrite equations (\ref{eqxi}) and (\ref{eqfxi}) as follows:
$$
\left(\begin{array}{c}\hat{b}_{i+1} \\ \hF_{i+1}(\hat{b}_{i+1})\end{array}\right) = \left(\begin{array}{cc} 1 & 0 \\ 0 & 1 \end{array}\right)\left(\begin{array}{c} b_{i+1} \\ F_{i+1}(b_{i+1}) \end{array}\right) + \left\{%
\begin{array}{lll}
\left(\begin{array}{c} \gamma \\ 0 \end{array}\right) & \qquad & b_{i+1} > s^*_{i+1} \\
\left(\begin{array}{c} 0 \\ 0 \end{array}\right) & \qquad & b_{i+1} < s^*_{i+1}
\end{array}\right.
$$

The new breakpoints, i.e. the pairs $(s^*_{i}, 0)$ and $(s^*_{i} + \gamma, 0)$, should be inserted into $\TT(F_{i})$. 

To perform $\UnUpdate(\TT(F_i),t(v_i), \gamma)$, we simply perform the inverse of all of these operations.  
\end{proof}

\section{LUR on Paths} \label{sec:pathlur}
Let $P = (V, A)$ be an {\em undirected} path where $V = \{v_1, \dots, v_n\}$ and $A = \{\{v_i,v_{i+1}\}, 1 \leq i < n\}$ and given $t \in \reals^V$. For $v_i \in V$ let $P_i=(V,A_i)$ be a directed graph in which all edges are directed 
towards $v_i$; that is, for $j<i$, $(v_j, v_{j+1}) \in A_i$ and for $j>i$ $(v_j, v_{j-1}) \in A_i$.  
For each $i = 1, \dots, n$, let $\Gamma_i = \Gamma(P_i, V) \subseteq \reals^V$ and let $\sigma_i = \arg\min_{s \in \Gamma_i} E_V(s)$.  
If $\kappa = \arg \min_{i} E_V(\sigma_i)$, then $\sigma_\kappa$ is the $\gamma$-LUR of $t$ on $P$.

We can find $\sigma_\kappa$ in $O(n \log^2 n)$ time by solving the LIR problem, then traversing the path while maintaining the optimal solution using $\Update$ and $\UnUpdate$.  Specifically,
for each $i=1,\dots, n$, let $V_i^- = \{v_1, \ldots, v_{i-1}\}$ and $V_i^+ = \{v_{i+1}, \ldots, v_n\}$. 
For $P_i$, let $\hF^-_{i-1}$, $\hF^+_{i+1}$ be the functions on directed paths $P[V_i^-]$ and $P[V_i^+]$, respectively, as defined in (\ref{eqfi}).  Set $\bar{F}_i(x) = 2(x-t(v_i))$.  Then the function $F_i(x) = d E_{v_i}(x) / dx$ can be written as $F_i(x) = \hF_{i-1}^-(x) + \hF_{i+1}^+(x) + \bar{F}_i(x)$.  We store $F_i$ as the tree set $S_i = \{\TT(\hF_{i-1}^-), \TT(\hF_{i+1}^+), \TT(\bar{F}_i)\}$.  
By performing $\eval^{-1}(0)$ we can compute $s_i^*$ in $O(\log^2 n)$ time (the rate limiting step), and then we can compute $E_{V}(s_i^*)$ in $O(\log n)$ time using $\integrate(s_i^*)$.  
Assuming we have $\TT(\hF_{i-1}^-)$ and $\TT(\hF_{i+1}^+)$, we can construct $\TT(\hF_i^-)$ and $\TT(\hF_{i+2}^+)$ in $O(\log n)$ time be performing 
$\Update(\TT(F^-_{i-1}), t(v_i), \gamma)$ and 
$\UnUpdate(\TT(F^+_{i+1}), t(v_{i+1}), \gamma)$.  
Since $\c{S}_1 = \{\TT(F^-_0), \TT(F^+_2), \TT(\bar{F}_1)\}$ is constructed 
in $O(n \log n)$ time, finding $\kappa$ by searching all $n$ tree sets 
takes $O(n \log^2 n)$ time.

\begin{theorem}
Given an undirected path $P = (V, A)$ and a $t \in \mathbb{R}^V$ together with a real $\gamma \geq 0$, the
$\gamma$-LUR of $t$ on $P$ can be found in $O(n\log^2 n)$ time.
\end{theorem}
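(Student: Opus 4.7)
The plan is to reduce LUR on an undirected path to $n$ instances of LIR, one per candidate location of the unique local maximum, and to share work across these instances via the ACT machinery of Section~\ref{sec:datastructure}. The starting observation is that once the local maximum is pinned to a vertex $v_i$, the unimodality constraint becomes exactly $\gamma$-LIR on the directed path $P_i$ obtained by orienting every edge toward $v_i$; hence the LUR is $\sigma_\kappa$ for $\kappa = \argmin_i E_V(\sigma_i)$.

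To evaluate each $E_V(\sigma_i)$ without recomputing from scratch, I would split $P_i$ at $v_i$ into the left directed path on $V_i^- = \{v_1, \dots, v_{i-1}\}$ and the right directed path on $V_i^+ = \{v_{i+1}, \dots, v_n\}$, each carrying an LIR problem whose derivative is the piecewise-linear function $\hF_{i-1}^-$ or $\hF_{i+1}^+$ of (\ref{eq:hatF}). The derivative of the minimum energy at $v_i$ is then $F_i(x) = \hF_{i-1}^-(x) + \hF_{i+1}^+(x) + 2(x - t(v_i))$, which I would maintain implicitly as the tree set $\c{S}_i = \{\TT(\hF_{i-1}^-), \TT(\hF_{i+1}^+), \TT(\bar F_i)\}$. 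A single $\evali(0)$ call on $\c{S}_i$ returns the optimal height $s_i^*$ of the maximum in $O(\log^2 n)$ time, and combining $\integrate$ across the three ACTs with the backsolve of Lemma~\ref{lem:backsolve} on each side recovers $E_V(\sigma_i)$ in $O(\log n)$ additional time.

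The final step is to sweep $i = 1, \dots, n$ while maintaining $\c{S}_i$ incrementally. Moving from $\c{S}_i$ to $\c{S}_{i+1}$ requires exactly one $\Update$ to extend the left side by $v_i$ and one $\UnUpdate$ to retract the right side past $v_{i+1}$, each costing $O(\log n)$ by Lemma~\ref{lem:tree-update}. Initializing $\c{S}_1$ amounts to running the right-to-left path LIR of Section~\ref{sec:pathlir} once, in $O(n \log n)$ time. Summing over the sweep gives an overall $O(n \log^2 n)$ bound, dominated by the $n$ invocations of $\evali$ on the tree set.

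The main obstacle is the $\evali(0)$ call on a tree set of three ACTs: locating the zero of a sum of three monotone piecewise-linear functions whose breakpoints are partitioned across three separate balanced trees requires a Frederickson--Johnson style parametric search, which is precisely the source of the extra $\log n$ factor over the LIR algorithm. Beyond that, identifying $\kappa$ during the sweep and backsolving $\sigma_\kappa$ on each side of $v_\kappa$ at the end follow directly from the LIR machinery already in place.
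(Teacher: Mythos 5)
Your proposal matches the paper's proof essentially step for step: decompose the unimodal problem into $n$ LIR instances indexed by the peak location, split each $P_i$ at $v_i$ into left and right directed paths, maintain $F_i$ as a tree set $\{\TT(\hF_{i-1}^-), \TT(\hF_{i+1}^+), \TT(\bar F_i)\}$, sweep $i$ with one $\Update$ and one $\UnUpdate$ per step, and identify the bottleneck as the Frederickson--Johnson style $\evali(0)$ on the tree set. One small slip in wording: obtaining $E_V(\sigma_i)$ per candidate $i$ uses only $\integrate$, not the backsolve of Lemma~\ref{lem:backsolve} (that backsolve costs $O(n)$ and is invoked only once, for $\kappa$ at the end, as you correctly state in your final paragraph).
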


\section{LIR on Rooted Trees} \label{sec:treelir} 

Let $T = (V, A)$ be a rooted tree with root $r$ and let for each vertex $v$, $T_v = (V_v, A_v)$ denote the subtree of $T$ rooted at $v$. Similar to the case of path LIR, for each vertex $v \in V$ we use the shorthands $E_v = E_{V_v}$ and $\tilde E_v = E_{T_v}^v$. Since the subtrees rooted at distinct children of a node $v$ are disjoint, 
one can  write an equation corresponding to (\ref{eq3}) in the case of paths, for any vertex $v$ of $T$:
\begin{equation}\label{eqte-tree}
\tilde E_{v}(x)  = (x - t(v))^2 +  \sum_{u \in \delta^-(v)}\min_{x - \gamma \leq y \leq x}\tilde E_{u}(y), 
\end{equation}
where $\delta^-(v) = \{u \in V \mid (u, v) \in A\}$.
An argument similar to that of Lemma \ref{lem:monotone} together with Lemma \ref{lem:convex-general} implies that 
for every $v \in V$, the function $\tilde E_v$ is convex and piecewise quadratic, and its derivative $F_v$ is continuous, monotonically increasing, and piecewise linear.
We can prove that $F_v$ satisfies the following recurrence where $\hF_u$ is defined analogously to $\hF_i$ in (\ref{eq:hatF}): 
\begin{equation}\label{eq9}
F_v(x) = 2(x - t(v)) + \sum_{u \in \delta^-(v)} \hF_{u}(x).
\end{equation}

Thus to solve the LIR problem on a tree, we post-order traverse the tree (from the leaves toward the root) and when processing a node $v$, we compute and sum up the linear functions $\hF_{u}$ for all children $u$ of $v$ and use the result in (\ref{eq9}) to compute the function $F_v$. We then solve $F_v(x) = 0$ to find $s^*_v$. As in the case of path LIR, $\hF_v$ can be represented by an ACT $\TT(\hF_v)$. For simplicity, we assume each non-leaf vertex $v$ has two children $h(v)$ and $l(v)$, where $|\hF_{h(v)}| \geq |\hF_{l(v)}|$.  We call $h(v)$ \emph{heavy} and $l(v)$ \emph{light}.  Given $\TT(\hF_{h(v)})$ and $\TT(\hF_{l(v)})$, we can compute $\TT(\hF_v)$ and $s_v^*$ with the operation $\Update(\TT(\hF_{h(v)} + \hF_{l(v)}), t(v), \gamma)$ in $O(|\hF_{l(v)}| (1 + \log |\hF_{h(v)}|/\log |\hF_{l(v)}|))$ time.  The merging of two functions dominates the time for the update.

\begin{theorem}\label{thm:LIRtree}
Given a rooted tree $T = (V, A)$ and a function $t \in \mathbb{R}^V$ together with a Lipschitz constant $\gamma$, one can find in $O(n \log n)$ time, the $\gamma$-Lipschitz isotonic regression of $t$ on $T$.
\end{theorem}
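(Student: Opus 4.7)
The plan is a post-order (bottom-up) traversal of $T$. At each vertex $v$, we compute and store an ACT $\TT(\hF_v)$ representing the function $\hF_v$ obtained from $F_v$ via the direct analog of (\ref{eq:hatF}), while also recording $s_v^*$. Concretely, when processing $v$ with children $u_1, \dots, u_d$ sorted so that $|\hF_{u_1}| \geq |\hF_{u_2}| \geq \cdots \geq |\hF_{u_d}|$, I would iteratively merge $\TT(\hF_{u_2}), \dots, \TT(\hF_{u_d})$ into $\TT(\hF_{u_1})$ using Lemma \ref{lem:F+G}, yielding $\TT(\sum_i \hF_{u_i})$. Applying $\Update$ with parameters $t(v)$ and $\gamma$, exactly as in the path case, then injects the linear term $2(x - t(v))$ via an $\atrans$, reads off $s_v^* = \evali(0)$, splits at $s_v^*$, and shifts the right portion by $\gamma$ to produce $\TT(\hF_v)$.

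Once every $s_v^*$ has been recorded, the regression $s$ itself is recovered by a top-down traversal mirroring Lemma \ref{lem:backsolve}: set $s(r) = s_r^*$, and for each non-root vertex $v$ with parent $p$, take $s(v)$ to be $s_v^*$ clamped to the interval $[s(p) - \gamma, s(p)]$. Correctness hinges on the strict convexity and piecewise-quadratic nature of each $\tilde E_v$, which follows from Lemma \ref{lem:convex-general} combined with a tree analog of the recurrence (\ref{eqei}), together with continuity and monotonicity of $F_v$ (the direct tree analog of Lemma \ref{lem:monotone}), both of which extend routinely from the path case because children's subtrees are vertex-disjoint and contribute independently to the sum in (\ref{eqte-tree}).

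The main obstacle is the running-time analysis, which I would handle via a heavy-light decomposition coupled with the refined merging bound of Lemma \ref{lem:F+G}. At each internal vertex designate the child with the largest $|\hF_u|$ as heavy and the remaining children as light. Each light merge at $v$ of a tree of size $\ell$ into an accumulator of size $h \geq \ell$ costs $O(\ell(1 + \log h / \log \ell))$. Every root-to-leaf path crosses only $O(\log n)$ light edges, so each breakpoint is involved in $O(\log n)$ merges over the whole algorithm. The critical subtlety is that it is precisely the $\log h / \log \ell$ factor from Brown--Tarjan (rather than a plain $\log n$) that prevents a second logarithm from appearing: summing $\ell(1 + \log h / \log \ell)$ over the heavy-light charging telescopes to $O(n \log n)$ total, where a naive $O(\ell \log n)$ per merge would have given $O(n \log^2 n)$. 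The $\Update$ calls add only $O(\log n)$ per vertex, and the top-down recovery phase is $O(n)$, so the overall bound is $O(n \log n)$.
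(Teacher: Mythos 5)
Your proposal matches the paper's proof: the same bottom-up traversal merging light children into heavy ones via Lemma~\ref{lem:F+G}, followed by $\Update$ and a top-down backsolve, and the same heavy-light charging argument bounding total merge cost by $O(n\log n)$. The only detail left implicit in your sketch is the inequality $1 + \log|\hF_{h(v)}|/\log|\hF_{l(v)}| \le 2 + \lceil\log(|\hF_{h(v)}|/|\hF_{l(v)}|)\rceil$ (valid since $|\hF_{l(v)}|\ge 2$), which converts the Brown--Tarjan $\log h/\log\ell$ per-breakpoint cost into the additive $\log(h/\ell)$ quantity whose sum along any root-to-leaf path telescopes to $O(\log n)$.
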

\begin{proof}
Let $T$ be rooted at $r$, it is sufficient to bound the cost of all calls to $\Update$ for each vertex: $U(T) = \sum_{v \in T} |\hF_{l(v)}| (1 + \log |\hF_{h(v)}| / \log |\hF_{l(v)}|)$.  

For each leaf vertex $u$ consider the path to the root $P_{u,r} = \langle u = v_1, v_2, \ldots, v_s = r\rangle$.  For each light vertex $v_i$ along the path, let its \emph{value} be $\nu(v_i) = 2+ \lceil \log(|\hF_{h(v_{i+1})}| / |\hF_{l(v_{i+1}) = v_i}|) \rceil$.  Also, let each heavy vertex $v_i$ along the path have \emph{value} $\nu(v_i) = 0$.  
For any root vertex $u$, the sum of values $\Upsilon_u = \sum_{v_i \in P_{u,r}} \nu(v_i)  \leq 3 \log n $, since $|\hF_{v_i}| 2^{\nu(v_i)-1} \leq |\hF_{v_{i+1}}|$ for each light vertex and $|\hF_r| = n$.  

Now we can argue that the contribution of each leaf vertex $u$ to $U(T)$ is at most $\Upsilon_u$.  Observe that for $v \in T$, then $\nu(l(v))$ is the same for any leaf vertex $u$ such that $l(v) \in P_{u,r}$.  Thus we charge $\nu(l(v))$ to $v$ for each $u$ in the subtree rooted at $l(v)$.  Since $\nu(l(v)) \geq 2 + \log(|\hF_{h(v)}|/|\hF_{l(v)}|) \geq 1 + \log |\hF_{h(v)}| / \log |\hF_{l(v)}|$ (for $\log |\hF_{h(v)}| \geq \log |\hF_{l(v)}|$), then the charges to each $v \in T$ is greater than its contribution to $U(T)$.  Then for each of $n$ leaf vertices $u$, we charge $\Upsilon_u \leq 3 \log n$.  Hence, $U(T) \leq 3 n \log n$.  
\end{proof}

\section{LUR on Trees}
\label{sec:treelur}

Let $T = (V,A)$ be an unrooted tree with $n$ vertices.  
Given any choice of $r \in V$ as a root, we say $T^{(r)}$ is the tree $T$ rooted at $r$.  
Given a function $t \in \b{R}^V$, a Lipschitz constraint $\gamma$, and a root $r \in V$, the $\gamma$-LIR regression of $t$ on $T^{(r)}$, denoted $s^{(r)}$, can be found in $O(n \log n)$ time using Theorem \ref{thm:LIRtree}.  We let $\xi^{(r)} = \min_{x \in \b{R}} \tilde{E}_r(x)$ as defined in (\ref{eqte-tree}) and let $r^* = \arg \min_{r \in V} \xi^{(r)}$.  The $\gamma$-Lipschitz unimodal regression ($\gamma$-LUR) of $t$ on $T$ is the function $s^{(r^*)} \in \b{R}^V$.  
Naively, we could compute the $\gamma$-LIR in $O(n^2 \log n)$ time by invoking Theorem \ref{thm:LIRtree} with each vertex as the root. In this section we show this can be improved to $O(n \log^3 n)$.

We first choose an arbitrary vertex $\hat{r} \in V$ (assume it has at most two edges in $A$) as an honorary root of $T$.  For simplicity only, we assume that every vertex of $T$ has degree three or less.  
Let the weight $\omega(v)$ of a subtree of $T^{(\hat{r})}$ at $v$ be the number of vertices below and including $v$ in $T^{(\hat{r})}$.
With respect to this honorary root, for each vertex $v \in V$, we define $p(v)$ to be the parent of $v$ and $h(v)$ (resp. $l(v)$) to be the heavy (resp. light) child of $v$ such that $\omega(h(v)) \geq \omega(l(v))$; $\hat{r}$ has no parent and leaf vertices (with respect to $\hat{r}$) have no children.  
If $v$ has only one child, label it $h(v)$. 
If $v = l(p(v))$ we say $v$ is a \emph{light} vertex, otherwise it is a \emph{heavy} vertex.
The following two properties are consequences of $\omega(h(v)) > \omega(l(v))$ for all $v \in T^{(\hat{r})}$.
\begin{itemize}
\vspace{-.1in}
\item[(P1)] $\sum_{v \in V} \omega(l(v)) = O(n \log n)$.  
\vspace{-.1in}
\item[(P2)] For $u \in V$, the path $\langle \hat{r}, \ldots, u \rangle$ in $T^{(\hat{r})}$ contains at most $\lceil\log_2 n\rceil$ light vertices.
\end{itemize}

As a preprocessing step, we construct the $\gamma$-LIR of $t$ for $T^{(\hat{r})}$ as described in Theorem \ref{thm:LIRtree}, and, for each light vertex $v$, we store a copy of $\TT(\hF_v)$ before merging with $\TT(\hF_{h(p(v))})$.  The total size of all stored ACTs is $O(n \log n)$ by property (P1).  
We now perform an inorder traversal of $T^{(\hat{r})}$, letting each $v \in V$ in turn be the structural root of $T$ so we can determine $\xi^{(v)}$.  
As we traverse, when $v$ is structural root, we maintain a tree set $\c{S}(\hF_{p(v)})$ at $p(v)$ with at most $k = \log_2 n$ trees, and ACTs $\c{T}(\hF_{h(v)})$ and $\c{T}(\hF_{l(v)})$ at $h(v)$ and $l(v)$, respectively.  
All functions $\hF_u$ for $u = \{p(v), l(v), h(v)\}$ are defined the same as with rooted trees, when rooted at the structural root $v$, for the subtree rooted at $u$.   
Given these three data structures we can compute $\xi_v$ in $O(\log^3 n)$ time by first temporarily inserting the ACTs $\TT(\hF_{h(v)})$, $\TT(\hF_{l(v)})$, and $\TT(\bar{F}_v)$ into $\c{S}(\hF_{p(v)})$, thus describing the function $F_v$ for a rooted tree with $v$ as the root, and then performing an $s^{(v)} = \evali(0)$ and $\integrate(s^{(v)})$ on $\c{S}(F_v)$.  Recall $\bar{F}_v(x) = 2(x-t(v))$.  
Thus to complete this algorithm, we need to show how to maintain these three data structures as we traverse $T^{(\hat{r})}$.  

During the inorder traversal of $T^{(\hat{r})}$ we need to consider 3 cases: letting $h(v)$, $l(v)$, or $p(v)$ be the next structural root, as shown in Figure \ref{fig:traverse}.  
\begin{itemize}
\item[(a)] $h(v)$ is the next structural root: 
To create $\c{S}(\hF_{p(h(v))}) = \c{S}(\hF_v)$ we add $\TT(\hF_{l(v)}))$ to the tree set $\c{S}(\hF_{p(v)})$ through a $\merge(\c{S}(\hF_{p(v)}), \TT(\hF_{l(v)}))$ operation and perform $\Update(\c{S}(\hF_{p(v)}), t(v), \gamma)$.  
The ACT $\TT(\hF_{l(h(v))})$ is stored at $l(h(v))$.  
To create the ACT $\TT(\hF_{h(h(v))})$ we perform $\UnUpdate(\TT(\hF_{h(v)}),\, t(h(v)), \gamma)$, and then set $\TT(\hF_{h(h(v))}) \leftarrow \TT(\hF_{h(h(v))} - \hF_{l(h(v))})$, using $\TT(\hF_{l(h(v))})$.  

\item[(b)] $l(v)$ is the next structural root: 
To create $\c{S}(\hF_{p(l(v))})$ we perform $\include(\c{S}(\hF_{p(v)}), \c{T}(\hF_{h(v)}))$ and then $\Update(\c{S}(\hF_{p(v)}), t(v), \gamma)$.  
The ACTs for $h(l(v))$ and $l(l(v))$ are produced the same as above.  

\item[(c)] $p(v)$ is the next structural root: 
To create $\TT(\hF_v)$ we set $\TT(\hF_v) \leftarrow \TT(\hF_{h(v)} + \hF_{l(v)})$ and then perform $\Update(\TT(\hF_v), t(v), \gamma)$.
To create $\c{S}(\hF_{p(p(v))})$ we first set $\c{S}(\hF_{p(p(v))}) \leftarrow \UnUpdate(\c{S}(\hF_{p(v)}), t(p(v)), \gamma)$.  
Then if $v = h(p(v))$, we perform $\unmerge(\c{S}(\hF_{p(p(v))}), \c{T}(\hF_{l(p(v))}))$, and $\TT(\hF_{l(p(v))})$ is stored at $l(p(v))$.
Otherwise $v = l(p(v))$, and we perform $\uninclude(\c{S}(\hF_{p(p(v))}), \c{T}(\hF_{h(p(v))}))$ to get $\c{S}(\hF_{p(p(v))})$, and $\c{T}(\hF_{h(p(v))})$ is the byproduct.  
\end{itemize}

\begin{figure}
\begin{center}
\includegraphics[scale=1]{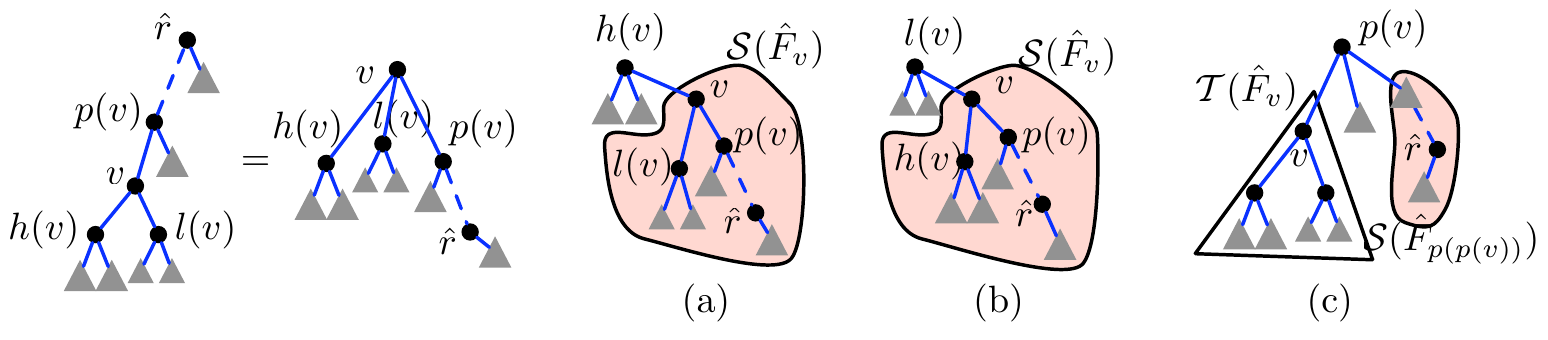}
\end{center}
\myCaption{\label{fig:traverse} Illustration of $T^{(\hat{r})}$ ordered with respect to honorary root $\hat{r}$.  This is redrawn, ordered with respect to  structural root as $v$.  The tree is redrawn with structural root as $h(v)$ after case (a), $l(v)$ after case (b), and $p(v)$ after case (c).}
\end{figure}

We observe that on a traversal through any vertex $v \in V$, the number of ACTs in the tree set $\c{S}(\hF_{p(v)})$ only increases when $l(v) = u$ is the new structural root.  Furthermore, when the traversal reverses this path so that $p(u)$ is the new root where $u = l(p(u))$, the same ACT is removed from the tree set.  Thus, when any vertex $v \in V$ is the structural root, we can bound the number of ACTs in the tree set $\c{S}(\hF_{p(v)})$ by the number of light vertices on the path from $\hat{r}$ to $v$.  Thus property (P2) implies that $\c{S}(\hF_{p(v)})$ has at most $\log_2 n$ ACTs.  

We can now bound the runtime of the full traversal algorithm described above.  Each vertex $v \in V$ is visited as the structural root $O(1)$ time and each visit results in $O(1)$ $\Update$, $\UnUpdate$, $\include$, $\uninclude$, $\merge$, and $\unmerge$ operations.  
Performing a single $\Update$ or $\UnUpdate$ on a tree set takes at most $O(k \log^2 n) = O(\log^3 n)$ time, and each $\include$ and $\uninclude$ operation takes $O(1)$ time, so the total time is $O(n \log^3 n)$.  
We now need to bound the global costs of all $\merge$ and $\unmerge$ operations.  In proving Theorem \ref{thm:LIRtree} we showed that the cost of performing a $\merge$ of all light vertices into heavy vertices as we build $\c{T}(\hF_{\hat{r}})$ takes $O(n \log^2 n)$.  Since at each vertex $v \in V$ we only $\merge$ and $\unmerge$ light vertices, the total time of all $\merge$ and $\unmerge$ operations is again $O(n \log^2 n)$.  

Thus we can construct $\xi^{(v)}$ for each possible structural root, and using $r^* = \arg \min_{v \in V} \xi^{(v)}$ as the optimal root we can invoke Theorem \ref{thm:LIRtree} to construct the $\gamma$-LIR of $t$ on $T^{(r^*)}$.  

\begin{theorem}
Given an unrooted tree $T = (V,A)$ and a function $t \in \b{R}^V$ together with a Lipschitz constraint $\gamma$, we can find in $O(n \log^3 n)$ time 
the $\gamma$-Lipschitz unimodal regression of $t$ on $T$.
\label{thm:LURtree}
\end{theorem}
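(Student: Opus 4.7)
The plan is to split the argument into a correctness reduction and a cost analysis that combines preprocessing, per-root evaluation, and data-structure maintenance.

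First I would establish that the $\gamma$-LUR of $t$ coincides with $s^{(r^*)}$ for $r^* = \argmin_{v \in V}\xi^{(v)}$. Any unimodal $s$ has a unique local maximum $m$; along every path from a vertex to $m$ the values of $s$ are monotonically non-decreasing, so orienting every edge of $T$ toward $m$ produces exactly the directed tree $T^{(m)}$, on which $s$ satisfies isotonicity (I). This gives $E_V(s)\ge \xi^{(m)}\ge \xi^{(r^*)}$. Conversely, each $s^{(v)}$ attains its maximum at $v$ by construction and is therefore unimodal on $T$ with energy $\xi^{(v)}$, so $\min_{v}\xi^{(v)}$ is realized by an actual unimodal function. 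Once $r^*$ has been identified, a single additional invocation of Theorem \ref{thm:LIRtree} extracts $s^{(r^*)}$ in $O(n \log n)$ time.

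Next I would account for the preprocessing and the per-root evaluations. Running the LIR algorithm of Theorem \ref{thm:LIRtree} on $T^{(\hat{r})}$ while saving $\TT(\hF_v)$ for every light vertex $v$ costs $O(n \log n)$ time and, by (P1), uses $O(n \log n)$ additional space. At each of the $n$ structural roots $v$, property (P2) gives $|\c{S}(\hF_{p(v)})|=O(\log n)$, so the temporary insertion of the three local ACTs followed by $\evali(0)$ and $\integrate$ costs $O(\log^3 n)$, summing to $O(n \log^3 n)$.

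The hard part will be bounding the global cost of maintaining $\c{S}(\hF_{p(v)})$, $\TT(\hF_{h(v)})$, and $\TT(\hF_{l(v)})$ as the structural root moves along the edges of $T^{(\hat{r})}$. Each of the $O(n)$ transitions issues $O(1)$ $\Update$/$\UnUpdate$ calls on a tree set of size $O(\log n)$ at $O(\log^3 n)$ apiece, and $O(1)$ $\include$/$\uninclude$ calls at $O(1)$ apiece; together these contribute $O(n \log^3 n)$. The subtle contributions are the $\merge$ and $\unmerge$ operations in cases (a) and (c), whose per-call cost depends on the operand sizes via Lemma \ref{lem:F+G}. For these I would reuse the heavy-light charging argument from the proof of Theorem \ref{thm:LIRtree}: a $\merge$ is only ever invoked when a light-child ACT is folded into the tree holding its heavy sibling's function, and each such $\merge$ is matched by a single later $\unmerge$ of equal cost when the traversal backtracks along the same edge. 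Thus the global $\merge$/$\unmerge$ cost telescopes to $O(n \log^2 n)$, which is absorbed into the $O(n \log^3 n)$ term, establishing the claimed bound.
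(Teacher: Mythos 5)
Your proposal is correct and follows the paper's approach exactly: preprocess with the rooted LIR algorithm of Theorem~\ref{thm:LIRtree} storing light-vertex ACTs, inorder-traverse $T^{(\hat{r})}$ while maintaining the tree set $\c{S}(\hF_{p(v)})$ and the child ACTs, evaluate $\xi^{(v)}$ at each structural root in $O(\log^3 n)$, and charge the $\merge$/$\unmerge$ cost via the heavy-light analysis from Theorem~\ref{thm:LIRtree}. The one substantive addition is your explicit justification that the $\gamma$-LUR coincides with $s^{(r^*)}$ (unimodal $\Rightarrow$ isotonic toward the peak, and conversely), which the paper states by fiat; a fully rigorous version would address ties, since the isotonic minimizer $s^{(v)}$ on $T^{(v)}$ need not have a \emph{strict} local maximum at $v$, but the paper glosses over this as well.
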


\section{Conclusion}\label{sec:conclusion}
In this paper we defined the Lipschitz isotonic/unimodal regression 
problems on DAGs and provided efficient algorithms for solving it on the special cases where the graph is a path or a tree. 
Our ACT data structure has proven to be a powerful tool in our approach.


Our algorithms can be generalized in a number of ways, as
listed below, without affecting their asymptotic running times. 
\begin{itemize} 
\item One can specify different Lipschitz value $\gamma$ for each Lipschitz constraint, thus writing the Lipschitz constraint involving $s_i$ and $s_{i+1}$ as $s_{i+1} \leq s_i + \gamma_i$. 
\item Isotonicity constraints can be regarded as (backward) Lipschitz constraints for which the Lipschitz constant is zero. One can allow this constant to be chosen arbitrarily and indeed permit different constants for different isotonicity constraints. In other words, the constraint $s_i \leq s_{i+1}$ can be replaced with $s_{i+1} \geq s_i + \delta_i$.

\item The $\ell_2$ norm $\|\cdot\|$ can be replaced with a ``weighted'' version $\|\cdot\|_{\llambda}$ for any $\lambda: V \rightarrow \mathbb{R}$ by defining for a function $s: V \rightarrow \mathbb{R}$:  
\[
\|s\|^2_{\lambda} = \sum_{v \in V} \lambda(v)s(v)^2.
\]
\end{itemize}

The most important open problem is solving LIR problem on general DAGs. The known algorithm for solving IR on DAGs runs in $O(n^4)$ and does not lend itself to the Lipschitz generalization. The difficulty here is to compute the function $\tilde E_i$ (or in fact $F_i$) when the $i$th vertex is processed and comes from the fact that the $F_{i_j}$ function, $j = 1, \dots, k$, where $i_1, \dots, i_k$ are vertices with incoming edges to $i$, are not independent. Independently, it is an important question whether IR can be solved more efficiently. In particular, the special case of the problem where the given DAG is planar has important applications in terrain simplification. 

\subsection*{Acknowledgements}
We anonymous reviewers for many helpful remarks that helped improve the presentation of this paper and we thank G\"unter Rote for suggesting the improvement in the analysis of $\TT(F+G)$, similar to his paper~\cite{Rote}.  

\bibliographystyle{abbrv}
\bibliography{lir}

\begin{thebibliography}{10}

\bibitem{Agarwal:06}
P.~K. Agarwal, L.~Arge, and K.~Yi.
\newblock {I}/{O}-efficient batched union-find and its applications to terrain
  analysis.
\newblock In {\em Proceedings 22nd ACM Symposium on Computational Geometry},
  pages 167--176, 2006.

\bibitem{AHKW06}
S.~Angelov, B.~Harb, S.~Kannan, and L.-S. Wang.
\newblock Weighted isotonic regression under the $l_1$ norm.
\newblock In {\em Proc. 17th Annual ACM-SIAM Symposium on Discrete Algorithms},
  pages 783--791, 2006.

\bibitem{Attali:08}
D.~Attali, M.~Glisse, S.~Hornus, F.~Lazarus, and D.~Morozov.
\newblock Persistence-sensitive simplification of functions on surfaces in
  linear time.
\newblock Manuscript, INRIA, 2008.

\bibitem{ABERS55}
M.~Ayer, H.~D. Brunk, G.~M. Ewing, W.~T. Reid, and E.~Silverman.
\newblock An empirical distribution function for sampling with incomplete
  information.
\newblock {\em Annals of Mathematical Statistics}, 26:641--647, 1955.

\bibitem{Bajaj:98}
C.~Bajaj, V.~Pascucci, and D.~Schikore.
\newblock Visualization of scalar topology for structural enhancement.
\newblock In {\em IEEE Visualizationw}, pages 51--58, 1998.

\bibitem{BBBB72}
R.~E. Barlow, D.~J. Bartholomew, J.~M. Bremmer, and H.~D. Brunk.
\newblock {\em Statistical Inference Under Order Restrictions: The Theory and
  Application of Isotonic Regression}.
\newblock Wiley Series in Probability and Mathematical Statistics. John Wiley
  and Sons, 1972.

\bibitem{Bremer:03}
P.~Bremer, H.~Edelsbrunner, B.~Hamann, and V.~Pascucci.
\newblock A multi-resolution data structure for two-dimensional morse
  functions.
\newblock In {\em Proceedings 14th IEEE Visualization Conference}, pages
  139--146, 2003.

\bibitem{BT79}
M.~R. Brown and R.~E. Tarjan.
\newblock A fast merging algorithm.
\newblock {\em Journal of Algorithms}, 15:416--446, 1979.

\bibitem{Bru55}
H.~D. Brunk.
\newblock Maximum likelihood estimates of monotone parameters.
\newblock {\em Annals of Mathematical Statistics}, 26:607--616, 1955.

\bibitem{Rote}
K.~Buchin, S.~Cabello, J.~Gudmendsson, M.~L\"offler, J.~Luo, G.~Rote, R.~I.
  Silveira, B.~Speckmann, and T.~Wolle.
\newblock Finding the most relevant fragments in networks.
\newblock Manuscript, October 2009.

\bibitem{Danner:07}
A.~Danner, T.~M\o{}lhave, K.~Yi, P.~K. Agarwal, L.~Arge, and H.~Mitasova.
\newblock Terrastream: from elevation data to watershed hierarchies.
\newblock In {\em 15th ACM International Symposium on Advances in Geographic
  Information Systems}, 2007.

\bibitem{Edelsbrunner:07}
H.~Edelsbrunner and J.~Harer.
\newblock {\em Persistent Homology: A Survey}.
\newblock Number 453 in Contemporary Mathematics. American Mathematical
  Society, 2008.

\bibitem{Edelsbrunner:03}
H.~Edelsbrunner, J.~Harer, and A.~Zomorodia.
\newblock Hierarchical morse - smale complexes for piecewise linear
  2-manifolds.
\newblock {\em Discrete {\&} Computational Geometry}, 30(1):87--107, 2003.

\bibitem{Edelsbrunner:00}
H.~Edelsbrunner, D.~Letscher, and A.~Zomorodian.
\newblock Topological persistence and simplification.
\newblock In {\em Proceedings 41st Annual Symposium on Foundations on Computer
  Science}, pages 454--463, 2000.

\bibitem{Edelsbrunner:06}
H.~Edelsbrunner, D.~Morozov, and V.~Pascucci.
\newblock Persistence-sensitive simplification functions on 2-manifolds.
\newblock In {\em Proceedings 22nd ACM Symposium on Computational Geometry},
  pages 127--134, 2006.

\bibitem{Frederickson:82}
G.~N. Frederickson and D.~B. Johnson.
\newblock The complexity of selection and ranking in $x+y$ and matrices with
  sorted columns.
\newblock {\em J. Comput. Syst. Sci.}, 24(2):192--208, 1982.

\bibitem{GW84}
S.~J. Grotzinger and C.~Witzgall.
\newblock Projection onto order simplexes.
\newblock {\em Applications of Mathematics and Optimization}, 12:247--270,
  1984.

\bibitem{Guskov:01}
I.~Guskov and Z.~J. Wood.
\newblock Topological noise removal.
\newblock In {\em Graphics Interface}, pages 19--26, 2001.

\bibitem{Jewel:75}
W.~S. Jewel.
\newblock Isotonic optimization in tariff construction.
\newblock {\em {ASTIN} Bulletin}, 8(2):175--203, 1975.

\bibitem{MM85}
W.~L. Maxwell and J.~A. Muckstadt.
\newblock Establishing consistent and realistic reorder intervals in
  production-distribution systems.
\newblock {\em Operations Research}, 33:1316--1341, 1985.

\bibitem{Ni:04}
X.~Ni, M.~Garland, and J.~C. Hart.
\newblock Fair {M}orse functions for extracting the topological structure of a
  surface mesh.
\newblock {\em ACM Transactions Graphaphics}, 23:613--622, 2004.

\bibitem{PX99}
P.~M. Pardalos and G.~Xue.
\newblock Algorithms for a class of isotonic regression problems.
\newblock {\em Algorithmica}, 23:211--222, 1999.

\bibitem{Preparata:85}
F.~P. Preparata and M.~I. Shamos.
\newblock {\em Computational geometry: an introduction}.
\newblock Springer-Verlag, New York, NY, USA, 1985.

\bibitem{RWD88}
T.~Robertson, F.~T. Wright, and R.~L. Dykstra.
\newblock {\em Order Restricted Statistical Inference}.
\newblock Wiley Series in Probability and Mathematical Statistics. John Wiley
  and Sons, 1988.

\bibitem{Soille:04a}
P.~Soille.
\newblock Morphological carving.
\newblock {\em Pattern Recognition Letters}, 25:543--550, 2004.

\bibitem{Soille:04}
P.~Soille.
\newblock Optimal removal of spurious pits in grid digital elevation models.
\newblock {\em Water Resources Research}, 40(12), 2004.

\bibitem{Soille:03a}
P.~Soille, J.~Vogt, and R.~Cololmbo.
\newblock Carbing and adaptive drainage enforcement of grid digital elevation
  models.
\newblock {\em Water Resources Research}, 39(12):1366--1375, 2003.

\bibitem{SWW03}
J.~Spouge, H.~Wan, and W.~J. Wilbur.
\newblock Least squares isotonic regression in two dimensions.
\newblock {\em Journal of Optimization Theory and Applications}, 117:585--605,
  2003.

\bibitem{Sto00}
Q.~F. Stout.
\newblock Optimal algorithms for unimodal regression.
\newblock {\em Computing Science and Statistics}, 32:348--355, 2000.

\bibitem{Stout:08}
Q.~F. Stout.
\newblock $l_{\infty}$ isotonic regression.
\newblock Unpublished Manuscript, 2008.

\bibitem{Sto08}
Q.~F. Stout.
\newblock Unimodal regression via prefix isotonic regression.
\newblock {\em Computational Statistics and Data Analysis}, 53:289--297, 2008.

\bibitem{Tho62}
W.~A. {Thompson, Jr.}
\newblock The problem of negative estimates of variance components.
\newblock {\em Annals of Mathematical Statistics}, 33:273--289, 1962.

\bibitem{vKS09}
M.~J. van Kreveld and R.~I. Silveira.
\newblock Embedding rivers in polyhedral terrains.
\newblock In {\em Proceedings 25th Symposium on Computational Geometry}, 2009.

\bibitem{Zomorodian:09}
A.~Zomorodian.
\newblock Computational topology.
\newblock In M.~Atallah and M.~Blanton, editors, {\em Algorithms and Theory of
  Computation Handbook}, page (In Press). Chapman \& Hall/CRC Press, Boca
  Raton, FL, second edition, 2009.

\bibitem{Zomorodian:05}
A.~Zomorodian and G.~Carlsson.
\newblock Computing persistent homology.
\newblock {\em Discrete \& Computational Geometry}, 33(2):249--274, 2005.

\end{thebibliography}

\newpage

\appendix

\section{$\integrate$ Operation}

\begin{lemma} 
We can modify an ACT datastructure for  $\TT(F)$ with $|F| =n$ and $F(x) = d E(x) / dx$, so for any $x \in R$ we can calculate $E(x) = \integrate(x)$ in $O(\log n)$ time, 
and so the cost of maintaining $\TT(F)$ is asymptotically equivalent to without this feature.
\label{lem:integrate}
\end{lemma}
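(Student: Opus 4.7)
The plan is to augment each node $z$ of $\TT(F)$ with a constant-size summary of the piecewise-linear curve through the vertices in the subtree rooted at $z$, stored in the ``subtree-local'' coordinate system obtained by applying $\phi_z$ and the $\phi$'s of $z$'s proper descendants but none of $z$'s ancestors. The summary will consist of the integral $I_z$ of the local curve over its $x$-range together with the $(x,y)$-coordinates of the leftmost and rightmost vertices $l_z, r_z$ of the subtree. Because these quantities are expressed in a coordinate system that does not see ancestor transformations, they will not need to be touched when an ancestor's $\phi$ changes.

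I first plan to show that $(I_z, l_z, r_z)$ is computable in $O(1)$ from the summaries of $z$'s two children by applying $\phi_z$ to the children's curves and gluing them around $\hat{p}_z = \phi_z(\overline{0})$. Every $\phi$ produced by the algorithms in the paper has linear part of the upper-triangular form $\bigl(\begin{smallmatrix} A & 0 \\ C & D \end{smallmatrix}\bigr)$, so substituting it into the trapezoidal formula for a piecewise-linear integral and telescoping over segments will express the transformed integral as an affine combination of the child's integral, the endpoint $x$-coordinates, and their squares; the two segments incident to $\hat{p}_z$ then add an $O(1)$ correction. (For a fully general affine $\phi_z$ the moment $\int x\, dy$ must also be maintained, but the argument is otherwise identical.) With this combine rule I will argue that each ACT operation retains its old asymptotic cost: an insertion or deletion recomputes the summary along the $O(\log n)$-length path of modified nodes; a rotation changes only two $\phi$'s and leaves the actual curve of the affected subtree invariant, so only the two rotated nodes need $O(1)$ summary recomputation; an $\atrans$ changes only the root's $\phi$, for an $O(1)$ summary update; and an $\itrans$ reuses its existing $O(\log n)$ rotations plus a single $\phi$-update at the isolated node.

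To answer $\integrate(x)$, I plan to descend from the root toward $\pred_x(x)$ while carrying the accumulated ancestor transformation $\Psi$ as a matrix--translation pair. Each time the descent turns right at a node $u$, the left subtree of $u$ together with the segment joining its rightmost vertex to $\hat{p}_u$ lies entirely to the left of $x$; I convert the left child's local summary, under $\Psi \circ \phi_u$, into that subtree's contribution to $\int_{b_1}^x F\,du$ in $O(1)$ via the combine formula and add it to a running total. After the descent I add the integral of the single linear piece from $\pred_x(x)$ to $x$, which is obtained from an $\eval$ call, together with the stored value $E(b_1)$, which I maintain in $O(1)$ whenever the leftmost breakpoint of $F$ changes. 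Queries with $x \le b_1$ or $x \ge b_n$ reduce to an $O(1)$ calculation using $\mu_-$ or $\mu_+$ and, when needed, the root's summary. Each descent step does $O(1)$ work over $O(\log n)$ levels, giving a total query time of $O(\log n)$.

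The main obstacle will be the two algebraic facts on which the whole construction rests: first, that $(I_z, l_z, r_z)$ is invariant under ancestor transformations, which I expect to follow immediately from the choice of subtree-local coordinates; and second, that $(I_z, l_z, r_z)$ is closed under the combine rule, so that a parent's summary is determined by its children's summaries and $\phi_z$ alone. The second fact is the delicate step --- it is exactly the ``no $y$ into $x'$\," structure of the $\phi$'s used by our algorithms that prevents the transformed integral from depending on higher-order moments of the children's curves and thereby keeps the summary truly constant-size. Once these facts are verified, the correctness and running-time claims follow from the per-operation analysis sketched above.
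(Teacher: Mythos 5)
Your proposal follows essentially the same blueprint as the paper's: augment each node of $\TT(F)$ with a constant-size, subtree-local summary consisting of an integral plus the extent of the subtree's vertex range, argue that the summary is invariant under ancestor $\phi$-changes, give an $O(1)$ parent-from-children combine rule, and answer $\integrate(x)$ by accumulating $O(1)$ contributions along the root-to-$\pred_x(x)$ search path. Where you and the paper differ is in how the $\phi$-application step is discharged. The paper stores the pair $(W(z),H(z))$ and decomposes each $\phi$ into translation, scaling and rotation, deriving trigonometric update formulas for the integral; you instead store the endpoints $l_z, r_z$ directly and observe that every linear part actually arising in the LIR/LUR pipeline has the form $\bigl(\begin{smallmatrix} A & 0 \\ C & D \end{smallmatrix}\bigr)$ (you call this upper-triangular, but with the zero in the top-right entry it is lower triangular --- in any case, the important point is that $x'$ does not depend on $y$), so a straightforward change of variables expresses the transformed integral as an affine combination of the child integral, the endpoint $x$-coordinates and their squares with no higher moments. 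This observation is algebraically sound and is, if anything, cleaner than the paper's decomposition, since shears do not decompose as rotation-times-scaling; it buys you a tighter fit to what the algorithm actually does, at the cost of generality that you explicitly flag by noting that a fully general affine $\phi$ would also require the moment $\int x\,dy$.

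Two smaller points. First, you claim a single rotation touches only the two rotated nodes' $\phi$'s; in fact the middle subtree that changes sides also has its root's $\phi$ adjusted, so in your convention (summary includes own $\phi$) that node's summary must be pushed through the new $\phi$ as well. This is still $O(1)$ work, so the asymptotic claim survives, but the accounting is off by one node. Second, and more substantively, you treat the maintenance of $E(b_1)$ as routine ("maintain in $O(1)$ whenever the leftmost breakpoint changes"), but under $\atrans$ the entire function $E$ changes, not just what happens at $b_1$, so there is no generic $O(1)$ ACT mechanism for this constant. The paper's fix is application-specific: it records the coefficients $c_1,c_2,c_3$ of $\sum_j(x-t_j)^2$ and uses the fact that, to the left of every breakpoint, the constrained LIR forces all values equal, so $E(b_1)=c_1 b_1^2+c_2 b_1+c_3$ exactly. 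Your proof should state this (or an equivalent) explicitly rather than asserting $O(1)$ maintenance; without it the argument has a genuine, though easily filled, gap.
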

\begin{proof}
For any $a \in \b{R}$, we calculate $\integrate(a) = E(a)$ by adding $E(b_1)$ to $\int_{b_1}^a F(x) \, dx$, where $b_1$ is the leftmost breakpoint in $F$.  Maintaining $E(b_1)$ is easy and is explained first.  Calculating $\int_{b_1}^a F(x) \, dx$ requires storing partial integrals and other information at each node of $\TT(F)$ and is more involved.  

We explain how to maintain the value $E(b_1)$ in the context of our $\gamma$-LIR problems in order to provide more intuition, but a similar technique can be used for more general ACTs.  
We maintain values $c_1, c_2, c_3$ such that $c_1 x^2 + c_2 x + c_3 = \sum_{i=1}^k (x-t_i)^2$ for the $k=O(n)$ values of $t_i$.  This is done by adding $1$, $-2t_i$, and $t_i^2$ to $c_1$, $c_2$, and $c_3$ every time a new $t_i$ is processed.  Or if two trees are merged, this can be updated in $O(1)$ time.  The significance of this equation is that $E(b_1) = c_1 b_1^2 + c_2 b_1 + c_3$ because the isotonic condition is enforced for all edges in $A$, forcing all values $s_i$ in the $\gamma$-LIR to be equal.

When we construct $\TT(F)$ we keep the following extra information at each node: 
 each node $z$ whose subtree spans breakpoints $b_i$ to $b_j$ maintains the integral $I(z) = \int_{b_i}^{b_j} \Phi_z^{-1}(F(x)-F(b_i))\, dx$ (this is done for technical reasons; subtracting $F(b_i)$ ensures $F(x)-F(b_i) \geq 0$ for $x \in [b_i, b_j]$ since $F$ is isotonic), the width $W(z) = \Phi_z^{-1}(b_j) - \Phi_z^{-1}(b_i)$, and the height $H(z) = \Phi_z^{-1}(F(b_j)) - \Phi_z^{-1}(F(b_i))$.  
Recall that $\Phi_z$ is the composition of affine transformations stored at the nodes along the path from the root to $z$.  
We build these values ($I(z)$, $W(z)$, and $H(z)$ for each $z \in \TT(F)$) from the bottom of $\TT(F)$ up, so we apply all appropriate affine transformations $\phi_{z_j}$ for decedents $z_j$ of $z$, but not yet those of ancestors of $z$.  In particular, if node $z$ has left child $z_l$ and right child $z_k$, then we can calculate 
\begin{eqnarray*}
I(z) &=& \phi_{z_l}(I(z_l)) + \phi_{z_k}(I(z_k)) + \phi_{z_l}(H(z_l)) \cdot \phi_{z_k}(W(z_k)) \textrm{ (see Figure \ref{fig:rotate}), } 
\\
H(z) &=& \phi_{z_l}(H(z_l)) + \phi_{z_k}(H(z_k)), \textrm{ and } 
\\
W(z) &=& \phi_{z_l}(W(z_l)) + \phi_{z_k}(W(z_k)).
\end{eqnarray*}  
When a new breakpoint $b$ is inserted or removed there are $O(\log n)$ nodes that are either rotated or are on the path from the node storing $b$ to the root.  For each affected node $z$ we recalculate $I(z)$, $W(z)$ and $H(z)$ in a bottom up manner using $z$'s children once they have been updated themselves.  Likewise, tree rotations require that $O(1)$ nodes are updated.  So to demonstrate that the maintenance cost does not increase, we just need to show how to update $I(z)$, $H(z)$ and $W(z)$ under an affine transformation $\phi_z(z)$ in $O(1)$ time.  

For affine transformation $\phi_z$ applied at node $z$ (spanning breakpoints $b_i$ through $b_j$), we can decompose it into three components: translation, scaling, and rotation.  
The translation does not affect $I(z)$, $W(z)$, or $H(z)$.  
The scaling can be decomposed into scaling of width $w$ and of height $h$.  We can then update $W(z) \leftarrow w \cdot W(z)$, $H(z) = h \cdot H(z)$, and $I(z) = w\cdot h\cdot I(z)$.  
To handle rotation, since translation does not affect $I(z)$, $W(z)$, or $H(z)$, we consider the data translated by $\Phi_z^{-1}((-b_i, 0))$ so that $\Phi_z^{-1}((b_i, F(b_i)-F(b_i))) = (0,0)$ is the origin and $\Phi_z^{-1}((b_j, F(b_j)-F(b_i))) = (x,y) = (W(z), H(z))$.  Then we consider rotations about the origin, as illustrated in Figure \ref{fig:rotate}.  
After a rotation by a positive angle $\theta$, the point $(x,y)$ becomes $(x', y') = (x \cos \theta - y \sin \theta, x \sin \theta + y \cos \theta)$.  This updates $W(z) \leftarrow x'$ and $H(z) \leftarrow y'$.  We can update $I(z)$ by subtracting the area $A_-$ that is no longer under the integral $y \cdot (y \tan \theta)/2$ and adding the area $A_+$ newly under the integral $x' \cdot (x' \tan \theta)/2$ as illustrated in Figure \ref{fig:rotate}.  Thus $I(z) \leftarrow I(v) + x' \cdot (x' \tan \theta)/2 - y \cdot (y \tan \theta)/2$.  A similar operation exists for a negative angle $\theta$, using the fact that $F$ must remain monotone.
As desired all update steps take $O(1)$ time.

\begin{figure}[h]
\begin{center}
\includegraphics{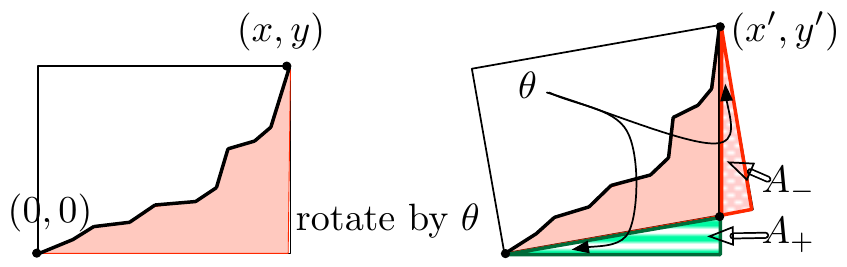} 
\hspace{.8in}
\includegraphics{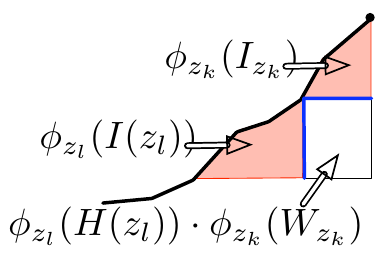}
\end{center}
\myCaption{\label{fig:rotate} Left: Illustration of the change in integral caused by a rotate of $\theta$ degrees counterclockwise. 
Right: Calculation of $I_z$ from children $z_k$ and $z_l$ of $z$.}
\end{figure}

Now for a value $a \in \b{R}$ we can calculate $\integrate(a) = E(a)$ as follows.  
Let $z_a \in \TT(F)$ be the node and $b$ be the breakpoint associated with $\pred_x(a)$.  We consider the path $Z = \langle z_a, \ldots, z_1, z_0\rangle$ from $z_a$ to the root $z_0$ of $\TT(F)$, and will inductively build an integral along $Z$.  
We calculate $I_{z_0} = \int_{b_1}^a \phi_{z_0}(F(x) - F(b_1)) \, dx$ similar to the way we calculated $I(z_0)$, but just over the integral $[b_1, a]$.  
As a base case, let $I_{z_a} = \int_{b}^{a} \Phi_{z_a}^{-1}(F(x)-F(b)) \, dx$ (which is easy to calculate because $F$ is linear in this range) and let $W_{z_a} = \Phi_{z_a}^{-1}(a)-\Phi_{z_a}^{-1}(b)$ describe the integral and width associated with the range $[b,a]$ and $z_a$.  
For a node $z \in Z$ that spans breakpoints $b_i$ through $b_j$, we calculate an integral $I_z = \int_{b_i}^a \Phi_z^{-1}(F(x)-F(b_i)) \, dx$ and width $W_z = \Phi_z^{-1}(a) - \Phi_z^{-1}(b_i)$ using its two children: $z_l$ and $z_k$.  Let $z_k$ be the right child which lies in $Z$ and for which we had inductively calculated $I_{z_k}$ and $W_{z_k}$.  
Then let $I_z = \phi_{z_l}(I(z_l)) + \phi_{z_k}(I_{z_k}) + \phi_{z_l}(H(z_l)) \cdot \phi_{z_k}(W_{z_k})$ (as illustrated in Figure \ref{fig:rotate}) and let $W_z = \phi_{z_l}(W(z_l)) + \phi_{z_k}(W_{z_k})$.  
Once $\phi_{z_0}(I_{z_0})$ and $\phi_{z_0}(W_{z_0})$ are calculated at the root, we need to adjust for the $F(b_1)$ term in the integral.  We add $\phi_{z_0}(W_{z_0})\cdot F(b_1)$ to $\phi_{z_0}(I(z_0)) = \int_{b_1}^a (F(x) - F(b_1)) \, dx$ to get $\int_{b_1}^a F(x) \, dx$.
So finally we set $\integrate(a) =  E(b_1) + \phi_{z_0}(I_{z_0}) + \phi_{z_0}(W_{z_0}) \cdot F(b_1)$.  
Since the path $Z$ is of length at most $O(\log n)$, $E(a)$ can be calculated in $O(\log n)$ time. 
\end{proof}

\end{document}